\def\p@subsection{}
\def\p@subsubsection{}
\newtheorem{theorem}{Theorem}
\newtheorem{definition}{Definition}
\newtheorem{example}{Example}
\newtheorem{proposition}{Proposition}
\newcommand\footnoteref[1]{\protected@xdef\@thefnmark{\ref{#1}}\@footnotemark}
\renewcommand{\p@subsection}{}
\renewcommand{\p@subsubsection}{}
\newcommand*{\ket}[1]{\lvert #1 \rangle}
\newcommand*{\ketbra}[2]{\lvert #1 \rangle\!\langle #2 \rvert}
\newcommand{\kB}{k_\mathrm{B}}
\DeclareMathOperator{\supp}{supp}
\newcommand{\tr}{{\rm Tr}}
\def\id{\mathbbm{1}}
\def\Dh{D_{\rm H}}
\def\Wext{W_{\rm gain}^{\varepsilon}}
\def\Wform{W_{\rm cost}^{\varepsilon}}
\newcommand{\caphead}[1]{{\bf #1}}
\begin{document}

\title{Beyond heat baths II: Framework for generalized thermodynamic resource theories}

\author{Nicole~Yunger~Halpern\footnote{E-mail: nicoleyh@caltech.edu}}
\affiliation{Institute for Quantum Information and Matter, Caltech, Pasadena, CA 91125, USA}
\affiliation{Perimeter Institute for Theoretical Physics, 31 Caroline Street North, Waterloo, Ontario Canada N2L 2Y5}

\date{\today}

%
%
% Abstract
%
%
\begin{abstract}

Thermodynamics, which describes vast systems, has been reconciled with small scales, relevant to single-molecule experiments, in \emph{resource theories}. Resource theories have been used to model exchanges of energy and information. Recently, particle exchanges were modeled; and an umbrella family of thermodynamic resource theories was proposed to model diverse baths, interactions, and free energies. This paper motivates and details the family's structure and prospective applications. How to model electrochemical, gravitational, magnetic, and other thermodynamic systems is explained. Szil\'{a}rd's engine and Landauer's Principle are generalized, as resourcefulness is shown to be convertible not only between information and gravitational energy, but also among diverse degrees of freedom. Extensive variables are associated with quantum operators that might fail to commute, introducing extra nonclassicality into thermodynamic resource theories. An early version of this paper partially motivated the later development of noncommutative thermalization. This generalization expands the theories' potential for modeling realistic systems with which small-scale statistical mechanics might be tested experimentally.

\end{abstract}

{\let\newpage\relax\maketitle}
% \maketitle{}

% \tableofcontents

%
%
% Intro
%
%
\section{Introduction}

Thermodynamics models diverse systems, from gases and magnets to chemical reactions, soap bubbles, and electrochemical batteries. Thermodynamic systems contain on the order of $10^{24}$ particles. But thermodynamic concepts such as heat, work, and entropy are relevant to small systems such as molecular motors and ratchets~\cite{LacosteLM08,SerreliLKL07}, the unfolding of single DNA and RNA molecules~\cite{BustamanteLP05,CollinRJSTB05,LiphardtDSTB02,AlemanyR10}, and nanoscale walkers~\cite{ChengSHEL12}. How can small scales, increasingly controllable in experiments, be reconciled with thermodynamics?
The \emph{resource-theory framework} describes small-scale exchanges of energy~\cite{Janzing00,FundLimits2,BrandaoHORS13,BrandaoHNOW14}; information~\cite{HHOLong,GourMNSYH13,FaistDOR12}; and, in recent work, particles~\cite{YungerHalpernR14}. These successes call for a generalization of thermodynamic resource theories to diverse realistic systems. This paper details the generalization proposed in~\cite{YungerHalpernR14}.

Thermodynamic resource theories fall under the umbrella of \emph{one-shot statistical mechanics}~\cite{Dahlsten13}, an application of \emph{one-shot information theory} that generalizes Shannon theory. Shannon theory quantifies the efficiencies with which protocols, such as data compression, can be performed as the number $n$ of trials approaches infinity: $n \to \infty$. Infinitely many trials are never performed in reality. The efficiencies of finitely many trials, and of faulty protocols, are quantified with one-shot information theory~\cite{RennerThesis}. 

A \emph{resource theory} quantifies the value attributable to quantum states by an agent who can perform only certain operations easily~\cite{HorodeckiO13,CoeckeFS14}. Examples include the resource theories for pure bipartite entanglement (which I will call \emph{entanglement theory}, for short)~\cite{HorodeckiHHH09}, asymmetry~\cite{BartlettRS07,marvian_theory_2013,BartlettRST06}, and quantum computation~\cite{VeitchMGE14}. Each resource theory is cast in terms of an agent who can process states in certain ways---via \emph{free operations}---at zero cost. The free operations in the entanglement theory, for example, are local operations and classical communications (LOCC). Free operations include the creation of \emph{free states}, such as product states in the entanglement theory. States that are not free, e.g., entangled states, are \emph{resources}. Resources have value because they, with free operations, can simulate nonfree operations. Combined with LOCC, a Bell pair can simulate quantum teleportation.

In the presence of a temperature-$T$ heat bath, nonequilibrium states have value because work can be extracted from them and work is needed to create them. Resource theories have been used to quantify the work extractable from, and the work cost of, single copies of states (e.g.,~\cite{Janzing00,BrandaoHORS13,FundLimits2,SkrzypczykSP13Extract}). As the number $n$ of copies approaches infinity, the work extractable, and the work cost, per copy converges to a function of the Helmholtz free energy $F := E - TS$. This convergence motivates the labeling of these resource theories as \emph{Helmholtz theories}~\cite{YungerHalpernR14}. Each Helmholtz theory is distinguished by the value $\beta$ of the  bath's inverse temperature. A Helmholtz theory in which all Hamiltonians are trivial ($H = 0$) has been portrayed as a resource theory for information~\cite{HHOShort,HHOLong}, called the \emph{resource theory of nonuniformity}~\cite{GourMNSYH13}. As justified below, the theory will be called the \emph{resource theory for entropy}, or the \emph{entropy theory}, after the thermodynamic potential that quantifies resourcefulness.

Helmholtz theories were generalized recently in~\cite{YungerHalpernR14}. Just as a system's Helmholtz free energy can be transformed into work in the presence of a heat bath, so can a system's grand potential in the presence of a heat-and-particle bath. So can the Gibbs free energy and other thermodynamic potentials, in other settings. Systems exchange not only heat and information, but also particles, volume, and angular momentum~\cite{Callen85,JaynesII57}. The success of Helmholtz theories invites us to generalize resource theories to the rest of thermodynamics. A generalization was proposed and was illustrated with grand-potential theories, which model exchanges of energy and particles, in~\cite{YungerHalpernR14}. 

This paper motivates and details the umbrella family of thermodynamic resource theories. Section~\ref{section:Background} introduces notation and background. Generalized thermodynamic resource theories are discussed in Sec.~\ref{section:DefineTheories}. Free \emph{equilibrating operations}, which conserve quantities such as heat, particle number, and volume, are defined in Sec.~\ref{section:FreeOps}. 
From free operations, the equilibrium form of free-state density operators is derived in Sec.~\ref{section:FreeStates}. 

Section~\ref{section:Quasiorder} demonstrates the equivalence between the quasiorder induced by free operations and $d$-majorization relative to equilibrium states. In Sec.~\ref{section:Battery}, the storage of work in diverse forms, such as chemical energy, gravitational energy, and electrical energy, is explored. The work $\Wext(R)$ extractable from, and the work $\Wform(R)$ required to create, one copy of a state $R$ with a faulty protocol is quantified in Sec.~\ref{section:OneShotWork}. 
$\Wext(R)$ and $\Wform(R)$ converge, in the asymptotic limit introduced in Sec.~\ref{section:ThermLimit}, to a difference between free energies. 

Open problems are discussed in Section~\ref{section:Discussion}. 
Opportunities include noncommuting operators
that model extensive variables.
Since this paper was first released, in 2014,
the theory of such noncommutation has been developed~\cite{Lostaglio_17_Thermodynamic,Guryanova_16_Thermodynamics,YH_16_Microcanonical}.
Further works (e.g.,~\cite{Ito_16_Optimal,NathBera_17_Thermodynamics,MurPetit_17_Generalised})
have built on these foundations.
The present paper provided one of the original motivations.
Related motivations were developed independently in~\cite{Lostaglio_14_Masters}.

This paper's contributions are largely conceptual.
The general framework for thermodynamic resource theories
subsumes already-defined theories,
is illustrated with new theories,
and dictates how to construct more theories.
The framework's construction highlights facets of thermodynamics
that have remained unrecognized in
pre-existing thermodynamic resource theories.
For example, Helmholtz theories are cast in terms of
the energy representation of conventional thermodynamics.
The resource theory of nonuniformity
is cast in terms of
the entropy representation 
(Sections~\ref{section:FundRelation} and~\ref{section:Massieu}).
Misconceptions about thermodynamic resource theories are clarified.
For example, the entropy theory is often cast as
a Helmholtz theory whose $\beta = \infty$ or 0.
But the entropy theory models systems
whose energies, volumes, and particle numbers remain constant.
Helmholtz theories model systems that exchange heat.
This discrepancy differentiates the entropy theory's free operations
from Helmholtz theories' (Sec.~\ref{section:FreeOps}).
% Every free unitary $U$ must commute with the total Hamiltonian:
% $[U, H_{\rm tot} ] = 0$.
% No free unitary in the entropy theory must.
I also identify and formalize hidden assumptions 
in thermodynamic resource theories.
For example, each system modeled in any thermodynamic resource theory 
implicitly has physical degrees of freedom
(particle number, angular momentum, etc.)
not modeled explicitly in Helmholtz theories.
I show how to model these degrees of freedom.
The implicit preservation of these ``hidden'' observables,
I concretize in the fixed-eigensubspace condition in Sec.~\ref{section:Fixed_eigen}.
Additionally, I introduce batteries whose resourcefulness manifests
in degrees of freedom other than energy
(e.g., in a high particle concentration).
(Angular momentum was identified as such a degree of freedom independently
in~\cite{VaccaroB11,BarnettV13}.)
Also the open questions are original.
They have motivated, for example, work on 
thermodynamic exchanges of noncommuting observables
that was published after the initial version of this paper was released.
Technical results' proofs are generalized straightforwardly 
from~\cite{YungerHalpernR14}.

This generalization of thermodynamic resource theories opens diverse, realistic systems to modeling by one-shot statistical mechanics. The generalization is intended to facilitate experimental tests of one-shot theory.

%
%
% Background
%
%
\section{Background and notation}
\label{section:Background}

Thermodynamics will guide the generalization of thermodynamic resource theories. I will review intensive and extensive variables, the Fundamental Relation of Thermodynamics, the energy and entropy representations, thermodynamic potentials, natural variables, and Massieu functions. Readers might know much of this material.
However, notation (adopted from Callen~\cite{Callen85}) and topics such as Massieu functions might be unfamiliar.

%
% Thermodynamic variables
%
\subsection{Intensive and extensive variables}

Every thermodynamic system $\mathcal{S}$ has thermodynamic variables---properties, some controlled by the experimenter, that characterize $\mathcal{S}$ as a whole. Examples appear in Table~\ref{table:Variables}.

\emph{Intensive variables} remain constant as the system's size changes. Examples include the temperature $T$, the pressure $p$, and the chemical potential $\mu^\alpha_i$ associated with phase $\alpha$ of chemical species $i$. 
% Reference: Each phase of species i corresponds to own chemical potential: Guggenheim -- p. 374
If $\mathcal{S}$ is polarizable, magnetizable, subject to mechanical forces, etc., then the external electric field $\mathbf{E}$, the external magnetic field $\mathbf{B}$, the stress, etc. serve as intensive variables~\cite{Alberty01}. I will distinguish between \emph{energy intensive variables} $p_i$ and \emph{entropy intensive variables} $F_i$ in Sec.~\ref{section:FundRelation}. % The example variables mentioned here are actually absolute variables of the $p_i$.

\emph{Extensive variables} scale with the system's size. Examples include the energy $E$, the thermodynamic entropy $S$, the volume $V$, the number $N^\alpha_i$ of species-$i$ particles in phase $\alpha$, polarization, magnetization, and mechanical strain~\cite{Alberty01}. I will denote $E$ by $X_0$ and will denote the $i^{\rm th}$ of the extensive variables other than $E$ and $S$ by $X_i$.

%
% Table: ext. and int. variables
%
\begin{table*}[t]
\begin{center}
\begin{tabular}{|c|c|c|} 
   \hline
        Type of work
   &   Intensive variables
   &   Extensive variables  \\
%         work   
%   &   variable   
%   &   variable  \\
   \hline \hline 
        Mechanical
   &   $-p$
   &   $V$  \\
         (pressure-vol.)   
   &   
   &      \\
   \hline 
        Chemical
   &   $\mu^\alpha_i$  
   &   $N^\alpha_i$   \\
   \hline
        Gravitational   
   &   $\phi  =  gh$
   &   $m$   \\
   \hline 
        Electrochemical   
   &   $\bar{\mu}^\alpha_i$ 
   &   $N^\alpha_i$   \\
   \hline
         Magnetic   
   &   $\mathbf{B}$
   &   $\mathbf{m}$   \\
   \hline
        Electrical
   &   $\mathbf{E}$
   &   $\mathbf{p}$   \\
   \hline 
   \hline 
\end{tabular}
\caption{\caphead{Example energy intensive variables, extensive variables, and types of thermodynamic work:} Pressure-volume work involves the negative $-p$ of pressure, as well as the volume $V$. Chemical work involves the standard chemical potential $\mu^\alpha_i$ of phase $\alpha$ of species $i$, as well as the number $N^\alpha_i$ of phase-$\alpha$ species-$i$ particles. Gravitational work involves the gravitational potential $\phi = gh$ and the mass $m$ (alternatively, as explained in Sec.~\ref{section:Battery}, the gravitational chemical potential $\tilde{\mu}_i^\alpha$ and the number $N^\alpha_i$ of species-$i$ particles at height $\alpha$). Electrochemical work involves the electrochemical potential $\bar{\mu}^\alpha_i$; magnetic work, the magnetic moment $\mathbf{m}$ and the external magnetic field $\mathbf{B}$; and electrical work, the polarization $\mathbf{p}$ and the external electric field $\mathbf{E}$.}
\label{table:Variables}
\end{center} 
\end{table*}

%
% Fundamental relation, energy/entropy rep'ns
%
\subsection{Fundamental Relation; energy and entropy representations}
\label{section:FundRelation}

The Fundamental Relation of Thermodynamics ``contains \emph{all} thermodynamic information about the system''~\cite{Callen85}. % Reference: p. 29
One can calculate from the relation, for example, conditions under which $\mathcal{S}$ is in equilibrium. The relation's \emph{energy representation} is
\begin{equation}   \label{eq:FundRelEnergy}
   E  =  E(S, X_1, X_2, \ldots, X_k);
\end{equation}
its equivalent \emph{entropy representation},
\begin{equation}   \label{eq:FundRelEntropy}
   S  =  S(X_0, X_1, X_2, \ldots, X_k).
\end{equation}
If $E$ depends only on $V$, on the $N^\alpha_i$, and on $S$, $\mathcal{S}$ is a \emph{simple system}.

The \emph{energy intensive variable} $p_i$ conjugate to $X_i$ is a partial derivative of $E$:
\begin{equation}
   p_0   :=   \left( \frac{  \partial E  }{  \partial S  } \right)_{X_m }  =  T
   \quad {\rm and} \quad
   p_i  :=  \left( \frac{  \partial E  }{  \partial X_i  } \right)_{S, X_{m \neq i} },
\end{equation}
wherein $X_m$ remains constant for all $m = 1, 2, \ldots, k  \neq  i$.
For example, the pressure $p$ is (the negative of) the energy intensive variable conjugate to volume:
\begin{equation}  
\label{eq:Pressure}
   p  =   - \left( \frac{  \partial E  }{  \partial V  } \right)_{S, X_m \neq V }.
\end{equation}
Equation~\eqref{eq:Pressure} follows, for a simple single-species, single-phase system, from
$E  =  TS  -  pV  +  \mu N$.
An analogue of Eq.~\eqref{eq:Pressure} has been used
to define pressure for quantum thermodynamic systems~\cite{Quan_09_Quantum}.
% Where I found the above reference: p. 28 of v1 of http://arxiv.org/abs/1508.06099
The \emph{entropy intensive variables} $F_i$ are partial derivatives of $S$:
\begin{equation}  \label{eq:SIntensive}
   F_i   :=   \left( \frac{  \partial S  }{  \partial X_i  } \right)_{ X_{m \neq i} }.
\end{equation}
Entropy intensive variables depend on energy intensive variables:
% Reference: Callen, p. 41
\begin{equation}  \label{eq:SIntensive2}
   F_0   =   \frac{1}{T},
   \quad {\rm and} \quad
   F_i   =   - \frac{ p_i }{ T }
   \; \; \forall  i = 1, 2, \ldots, k.
\end{equation}

Much thermodynamics can be cast equivalently in the energy and entropy representations. I will cast most thermodynamic resource theories in the energy representation, because Helmholtz theories have implicitly been cast so. The resource theory for entropy (or for information or nonuniformity) has implicitly been cast in the entropy representation.

%
% Thermo potentials
%
\subsection{Thermodynamic potentials, Legendre transforms, and natural variables}
\label{section:FreeEnergyIntro}

% References: Callen -- p. 141+ and Ch. 6

\emph{Thermodynamic potentials}, or \emph{free energies}, resemble a spring's potential energy:
The ability to perform work can be stored in, and drawn from, thermodynamic potentials. One thermodynamic potential is $E$; the others result from Legendre-transforming $E$. Which variables are transformed depends on which properties of $\mathcal{S}$ an experimenter can control, or which properties remain constant. I will review Legendre transforms and four examples.

$E$ is a function of the independent variables $S$ and $X_{i = 1, 2, \ldots, k}$. Suppose that the experimenter can control $X_{j+1}, \ldots, X_k$ and the intensive variables 
$(T), p_1, p_2, \ldots, p_j$. The $(T)$ should be interpreted as follows: If an experimenter can hold $T$ constant during a particular transformation, $T$ should be included in the list that describes that transformation. If the experimenter cannot, $T$ should be excluded. This notation enables us to treat $T$ similarly to other independent variables while distinguishing $T$ as will become necessary.

A Legendre transform substitutes the controlled intensive variables for the uncontrolled extensive variables in the set of independent variables:
\begin{align}
   E[(T), p_1, p_2, \ldots, p_j]
   & :=  \label{eq:ArbFreeE0} 
            \inf_{(S), X_1, \ldots, X_j} 
            \left\{
            E  -   \left(    \left( \frac{ \partial E }{ \partial S } \right)_{X_{i}}   S  \right)
                -  \sum_{i = 1}^j   \left( \frac{ \partial E }{ \partial X_i } \right)_{(S), X_{m \neq i}}   X_i
            \right\}   \\
   & =   \label{eq:ArbFreeE}
             \inf_{(S), X_1, \ldots, X_j} 
            \left\{   E   -  ( TS )   -   \sum_{i=1}^j   p_i  X_i   \right\}.
\end{align}
The symbols enclosed in extra parentheses---the $T$ in Eq.~\eqref{eq:ArbFreeE0}, the second term on the RHS of Eq.~\eqref{eq:ArbFreeE0}, the subscript $S$, and the $TS$---participate in a given Legendre transformation if and only if $T$ is controllable. 
As in Callen's notation, $E[(T), p_1, \ldots, p_j ]$ is a function of the independent variables
$(T), p_1, \ldots p_j, X_{j+1}, \ldots, X_k$, called \emph{natural variables}~\cite{Callen85}. 
The most work that $\mathcal{S}$ can perform, or the most heat that $\mathcal{S}$ can release, on average while these variables remain constant equals the change in $E[(T), p_1, p_2, \ldots, p_j]$~\cite{Callen85,Reichl80}:
\begin{align}   \label{eq:ThermoWork}
   \langle W \rangle  =  \Delta   E[(T), p_1, p_2, \ldots, p_j].
\end{align}
% References: Callen -- p. 158.    
%   Reichl -- p. 34, 36.
%   Alberty -- p. 1353 (top), p. 1354 (after Eq. 1.1-21)
$\mathcal{S}$ reaches equilibrium when $E[(T), p_1, p_2, \ldots, p_j]$ minimizes. %\footnote{
% < f > \label{Footnote:GibbsDuhem}
Legendre-transforming all the extensive variables yields the trivial fundamental relation $E = 0$~\cite{Callen85}.
To simplify notation, I will usually omit the infimum from Legendre transforms. Extremization will be implied.

\begin{example}
Suppose the experimenter can control $T$ but not $S$. For example, suppose that $\mathcal{S}$ exchanges a heat with a bath. The \emph{Helmholtz free energy}
\begin{align}
   E[T]   :=   E  -  TS   =:   F
\end{align}
describes the system more naturally than $E$ does.
If $\mathcal{S}$ is simple and consists of one species and one phase, $F = F(T, V, N)$. The average, over many trials, of the work extractable from $\mathcal{S}$ during any constant-$(T, V, N)$ process satisfies $W \leq \Delta F$. $\mathcal{S}$ reaches equilibrium when the probability distribution over the possible microstates $i$ becomes the canonical ensemble,
$\{ P_i  =  e^{ - \beta E_i } / Z \}$.
The partition function $Z  :=  \sum_i  e^{ - \beta E_i }$ normalizes the distribution,
and $\beta  =  1 / ( \kB T )$.

Replacing not only $S$ with $T$, but also $N$ with $\mu$ yields the \emph{grand potential}. 
If $\mathcal{S}$ consists of one species and one phase,
\begin{equation}
   E[ T,  \mu ]   :=   E  -  TS  -  \mu N
   =: \Phi (T, \mu, V).
\end{equation}
$\Phi$ governs exchanges of heat and particles. If $T$, $\mu$, and $V$ remain constant, $\mathcal{S}$ reaches equilibrium when microstates' probabilities reach the grand canonical ensemble,
$\{ P_i  =  e^{ - \beta ( E_i  -  \mu n_i ) } / Z \}$. 
Here, the partition function $Z  :=  \sum_i  e^{ - \beta ( E_i  -  \mu n_i ) }$.

Many chemicals react at room temperature and atmospheric pressure. Constant-$T$, constant-$p$ processes are described by the \emph{Gibbs free energy,}
\begin{equation}
   E[ T, p ]   :=   E  -  TS  +  p V
   =:   G (T, p, N).
\end{equation}
(In lists of natural variables, I sometimes replace the energy intensive variable $-p$ with the pressure $p$.) $\mathcal{S}$ reaches equilibrium when microstates' probabilities reach the isothermal-isobaric ensemble $P_i  =  e^{ - \beta ( E_i  +  p v_i ) } / Z$.
% Reference: p. 167 
A variation on $G$ suits a system characterized by a magnetic moment $\mathbf{m}$ and subject to an external magnetic field $\mathbf{B}$. If $T$ and $\mathbf{B}$ remain constant, the average extractable work does not exceed the change in the \emph{magnetic Gibbs potential,}
% Reference: Callen -- p. 201
\begin{equation}
   E[ T,  \mathbf{B} ]   :=   E  -  TS  -  \mathbf{B} \cdot \mathbf{m}.
\end{equation}
\end{example}

%
% Massieu functions
%
\subsection{Massieu functions}
\label{section:Massieu}

Massieu functions are the entropy representations of thermodynamic potentials.
Massieu functions include, and result from Legendre-transforming, the entropy:
\begin{equation}
   S[(F_0), F_1, F_2, \ldots, F_j]   :=
   \sup_{(X_0), X_1, \ldots, X_j}   \left\{
   S    -  \left(   \left( \frac{ \partial S }{ \partial X_0 }  \right)_{ X_{i \neq 0} }  X_0   \right)
   -  \sum_{i = 1}^j   \left(  \frac{ \partial S }{ \partial X_i }  \right)_{ X_{m \neq i} }   X_i
   \right\}.
\end{equation}
If $(F_0), F_1, F_2, \ldots, F_j, X_{j+1}, \ldots, X_k$ remain constant, $\mathcal{S}$ attains equilibrium when
$S[(F_0), F_1, F_2, \ldots, F_j]$ maximizes. Physical significances of Massieu functions are less straightforward than the free energies' interpretations as work and heat. 

Massieu functions deserve mention because $S$ plays the role, in the resource theory for entropy (or information, or nonuniformity), played by free energies in other thermodynamic resource theories. In the entropy theory, resourcefulness is quantified by one-shot functions that converge, in the thermodynamic limit, to $S$. $S$ can be viewed as the entropy-representation ``thermodynamic potential'' suited to closed isolated systems~\cite{Belof09}. The energy representation of $S$ is $E$.

%
%
% Define theories
%
%
\section{Defining thermodynamic resource theories}
\label{section:DefineTheories}

Different families of thermodynamic resource theories correspond to exchanges of different quantities, to different sets of natural variables, to different free energies, and to different types of baths or external fields. The rest of this paper concerns systems whose natural variables include $T$, as well as closed isolated systems (whose natural variables do not). Each parenthesized $T$ should be interpreted as follows: If a closed isolated system is being modeled, the natural variables are extensive, so the $T$ should be regarded as absent. Otherwise, the $T$ should be regarded as present. 

I will explain how to specify a thermodynamic resource theory, then will illustrate with known resource theories. I will show how to model systems and states. Then, I will introduce a \emph{fixed-eigensubspace condition} that represents conservation of energy, particle number, etc.

\subsection{Families of thermodynamic resource theories}

%
% Table: Resource theories
%
\begin{table*}[t]
\begin{center}
\begin{tabular}{|c|c|c|c|c|c|c|} 
   \hline
        Thermo-
   &   Quantities   
   &   Natural   
   &   Parameters
   &   States
   &   Conserved   
   &   Free density   \\
         dynamic   
   &   exchanged   
   &   variables  
   &   that specify 
   &   
   &   operators   
   &   operator  \\
         potential
   &
   &
   &  a theory
   & 
   &
   & \\
   \hline \hline 
        Entropy ($S$) /
   &   None
   &   $E, V, N$ /
   &   None
   &   $\rho$
   &   None
   &   Microcanoncial  \\
         energy ($E$)   
   &   
   &    $S, V, N$  
   &    
   &   
   &   
   &   $(  \underbrace{   \frac{1}{d}, \ldots, \frac{1}{d}   }_d ) $  \\
   \hline 
        Helmholtz
   &   Heat
   &   $T, V, N$  
   &   $\beta$ 
   &   $(\rho, H)$
   &   $H_{\rm tot}$   
   &   Canonical   \\
        free energy ($F$)   
   &   
   &   
   &   
   &   
   &      
   &   $e^{-\beta {H}} / Z$   \\
   \hline 
        Grand po-   
   &   Heat,
   &   $T, \mu, V$
   &   $\beta, \mu$
   &   $(\rho, H, N)$
   &   $H_{\rm tot}$,
   &   Grand canonical   \\
         tential ($\Phi$)   
   &   particles
   &   
   &   
   &   
   &   $N_{\rm tot}$
   &   $e^{-\beta ( {H} - \mu N )} / Z$    \\
%   \hline
%        Gibbs free
%   &   Heat,
%   &   $T, p, N$  
%   &   $\beta, p$ 
%   &   $(\rho, H, V)$
%   &   $H_{\rm tot}$,   
%   &   Isotherm.-isobaric   \\
%         energy ($G$)   
%   &   volume
%   &   
%   &   
%   &   
%   &   $V_{\rm tot}$
%   &   $e^{-\beta ( H  +  pV )} / Z$   \\
   \hline 
   \hline 
\end{tabular}
\end{center} 
\caption{\caphead{Commonly used thermodynamic potentials, and properties of the corresponding resource theories:} Sections~\ref{section:DefineTheories}-\ref{section:FreeStates} explain Columns 4-7. Each $Z$ denotes the partition function that normalizes the relevant state.}
\label{table:Classes}
\end{table*}

We wish to model interactions between a system $\mathcal{S}$ and a bath, using a thermodynamic resource theory. Suppose that the variables 
$(T), p_1, p_2, \ldots, p_j, X_{j+1}, X_{j + 2}, \ldots, X_k$ remain constant. The intensive variables $(T), p_1, \ldots, p_j$ characterize the bath accessible to the resource-theory agent. Each list of values of $(T), p_1, \ldots, p_j$ specifies one thermodynamic resource theory $\mathcal{T}^{ (\beta), p_1, \ldots, p_j }$ for the thermodynamic potential 
$E[(T), p_1, \ldots, p_j]$. 
% (The inverse temperature is denoted by $\beta = \frac{1}{ k_{\rm B} T}$.) 
All the resource theories characterized by the same intensive variables are mathematically equivalent and form a family. The bath exchanges with $\mathcal{S}$ the physical quantities represented by $(X_0), X_1, \ldots, X_j$. 

Attempting to specify the values of all the bath's intensive variables leads to a trivial resource theory. Suppose that, in a thermodynamic setting, each intensive variable $p_i$ changes by an amount $d p_i$. The \emph{Gibbs-Duhem Relation} interrelates the changes:
\begin{equation}
   \sum_{i=0}^k   X_i  \:  d p_i = 0
\end{equation}
\cite{Callen85}.
% Reference: p. 284
If you specify enough changes $d p_i$, you implicitly specify the rest of the $d p_j$. The Gibbs-Duhem Relation leads to the \emph{Gibbs Phase Rule}, which governs the number $f$ of independent intensive variables~\cite{Callen85,Alberty01}.  
% Reference: Callen:  P. 60 -- relatively simple systems.  
%   P. 284 -- general systems -- but less detail than in Alberty's paper.
% Alberty: p. 1354-1355
For example, the intensive variables of a simple single-species, single-phase system are $(p_1, p_2, p_3)  =  (T, p, \mu)$. Two of these $p_i$'s can be specified independently. Each Helmholtz theory is defined by one $p_i$ ($\beta$); and each grand-potential theory, by two $p_i$'s ($\beta$ and $\mu$).
Attempting to specify more than $f$ intensive variables independently yields the trivial Fundamental Relation $E = 0$~\cite{Callen85}.

Table~\ref{table:Classes} summarizes the following examples.
\begin{example}
The most prevalent family of thermodynamic resource theories consists of \emph{Helmholtz theories} (e.g.,~\cite{Janzing00,BrandaoHORS13,FundLimits2,SkrzypczykSP13Extract}). Each Helmholtz theory models exchanges of energy with a heat bath characterized by an inverse temperature $\beta$. Free operations preserve energy, and canonical ensembles $e^{- \beta H} / Z$ specify free states. Results include the work $\Wext(R)$ extractable from one copy of a quasiclassical state $R$,\footnote{
% < f >
Each quasiclassical state $R$ in any Helmholtz theory
has a density operator $\rho$ that commutes with
the state's Hamiltonian, $H$:
$[ \rho, H ] = 0$.}
% < /f >
and the work $\Wform(R)$ required to create one copy of $R$, via faulty protocols~\cite{FundLimits2}.
$\Wext(R)$ and $\Wform(R)$ depend on one-shot analogs of the Helmholtz free energy $F$.
\end{example}

\begin{example}
\emph{Grand-potential theories} model exchanges of energy and particles with heat-and-particle reservoirs. These theories were introduced in~\cite{YungerHalpernR14} to illustrate the generalization, detailed in this paper, of thermodynamic resource theories beyond Helmholtz theories. To specify a grand-potential theory, one specifies values of the bath's $\beta$ and chemical potential $\mu$. If particles of $m$ species in $\zeta$ phases are exchanged, one specifies $\mu^\alpha_i$ for all $i = 1, 2, \ldots, m$ and all $\alpha = 1, 2, \ldots, \zeta$.
Free operations conserve energy and particle number, and grand canonical ensembles are free. $\Wext(R)$ depends on a one-shot analog of the grand potential $\Phi$, and another one-shot analog bounds $\Wform(R)$.
\end{example}

\begin{example}
The \emph{entropy theory} models closed isolated systems. This theory was introduced in~\cite{HHOShort,HHOLong}, which covers the limit in which $n \to \infty$ copies of a state $\rho$ are processed. Single copies were analyzed in~\cite{BrandaoHNOW14,GourMNSYH13}. In~\cite{GourMNSYH13}, the entropy theory is called \emph{the resource theory of nonuniformity} or \emph{of informational nonequilibrium}. Free states are microcanonical ensembles $( \underbrace{ \frac{1}{d},   \frac{1}{d},   \ldots,   \frac{1}{d} }_d )$~\cite{HHOLong}.

The nonuniformity theory is equivalent to a Helmholtz theory in which all Hamiltonians are trivial: $H = 0$~\cite{HHOLong,GourMNSYH13}. Yet we shall see that the entropy theory models systems whose natural entropy-representation variables are $(E, V, N)$ (if each system is simple and consists of one species and one phase). $(E,V,N)$ are the natural variables of closed isolated systems. Because the variables natural to the entropy theory are not natural to Helmholtz theories, the entropy theory can be viewed as distinct from any Helmholtz theory.

The entropy theory has been thought to be a Helmholtz theory associated with a zero-temperature or infinite-temperature bath. But equating $\beta = \infty, 0$ fails to reduce Helmholtz theories' free operations to the entropy theory's free operations.
The free unitaries $U$ in Helmholtz theories
conserve the total Hamiltonian $H_{\rm tot}$:
$[U, H_{\rm tot}]  =  0$.
This requirement does not constrain
the free unitaries $U$ in the entropy theory.
See Sec.~\ref{section:FreeOps}.

The family of entropy theories differs from the other families in two ways. First, the entropy family (that models simple single-species, single-phase systems) contains only one theory. The family has no intensive natural variables whose different possible values would characterize different theories. Such an intensive variable would characterize a bath. Closed isolated systems do not interact with baths; so of course the entropy family lacks intensive natural variables.\footnote{
% < f >
More precisely, the family of entropy theories associated with each set 
$\{X_0, X_1, \ldots, X_k \}$ of natural variables contains only one theory. One entropy theory models simple single-species, single-phase systems, whose natural variables are $(E, V, N)$; one entropy theory models simple two-species, one-phase systems, whose natural variables are $(E, V, N_1, N_2)$; etc.}
% < /f >
Second, the entropy theory is cast in the entropy representation. Central results---the amount of ``information'' extractable from, or needed to create, one copy of a state $\rho$---depend on entropies, rather than being work.
\end{example}

%
% Define systems
%
\subsection{Systems}
\label{section:States}

The extensive variables $X_0, X_1, \ldots, X_k$, and $S$ characterize the thermodynamic system $\mathcal{S}$. Let us associate an operator with each $X_i$, as Jaynes proposed~\cite{JaynesII57}. 
% Reference: Jaynes II, p. 176
A Hamiltonian $H$ corresponds to $X_0  =  E$, for example, and the number operator corresponds to $N$. I will represent each operator (apart from $H$) with the symbol $X_i$ that represents the corresponding external variable. The association of the volume $V$ with an operator, which might sound unexpected, is discussed in Appendix~\ref{section:Volume}. 

% Definition: system
\begin{definition}[System]
Each \emph{system} $\mathcal{S}$ in the thermodynamic resource theory $\mathcal{T}^{(\beta), p_1, \ldots, p_j}$ is specified by a Hilbert space $\mathcal{H}$ and by the \emph{system operators}
$\bm{(}   (H), X_1, \ldots, X_k   \bm{)}$. The system operators are Hermitian operators defined on $\mathcal{H}$.
\end{definition}
\noindent For simplicity, I will often assume that the $X_i$ commute with each other and have discrete spectra.

%
% States
%
\subsection{States}

Recall that (in non-entropy thermodynamic resource theories)
$(T, p_1, \ldots, p_j, X_{j+1}, \ldots, X_k)$
are the natural variables of $\mathcal{S}$. The extensive-variable operators $X_1, \ldots, X_j$ conjugate to $p_1, \ldots, p_j$ will be called \emph{state operators}. To specify a state in a thermodynamic resource theory, one specifies state operators and a density operator.

% Definition: state
\begin{definition}[State]
Let $\mathcal{S}$ denote a system, associated with the Hilbert space $\mathcal{H}$, in 
a thermodynamic resource theory $\mathcal{T}^{(\beta), p_1, \ldots, p_j}$.
Each possible \emph{state} of $\mathcal{S}$ is specified by a $(j+1)$-tuple or $(j+2)$-tuple
\begin{equation}
   R   :=   \bm{(}   \rho,  (H),  X_1,  \ldots,  X_j   \bm{)},
\end{equation}
wherein the density operator $\rho$ is a positive-semidefinite linear operator defined on $\mathcal{H}$.
\end{definition}

For simplicity, and in accordance with earlier works on thermodynamic resource theories~\cite{Janzing00,HHOLong,FundLimits2,YungerHalpernR14}, I will often focus on density operators and state operators that commute and that have discrete spectra. Commuting operators, and the states they define, are called \emph{quasiclassical}. Free operations can diagonalize a quasiclassical $\rho$, whose resourcefulness is encapsulated in a vector 
$r  =  (r_1, r_2, \ldots, r_d)$ of the eigenvalues of $\rho$~\cite{Janzing00,FundLimits2,YungerHalpernR14}. A quasiclassical state, therefore, will often be represented by $R  =  \bm{(}   r, (H), X_1, \ldots, X_j \bm{)}$.

\begin{example}
Helmholtz theories that model simple single-species, single-phase systems have the natural variables $(p_0, X_1, X_2) = (T, V, N)$.  The only intensive natural variable is $p_0$, which is conjugate to $E$, which corresponds to the operator $H$. Hence each state is defined by $(\rho, H)$, as in~\cite{Janzing00,FundLimits2,BrandaoHNOW14,YungerHalpernR14}. A grand-potential theory that models simple single-species, single-phase systems has the natural variables $(p_0, p_1, X_2)  =  (T, \mu, V)$. The $T$ and $\mu$ are conjugate to $E$ and $N$, so a state is specified by $(\rho, H, N)$, as in~\cite{YungerHalpernR14}. 
The entropy theory's natural variables are extensive: $(X_1, X_2, X_3)  =  (E, V, N)$.
To specify a state, one specifies only $\rho$, as in~\cite{HHOShort,HHOLong,GourMNSYH13}.
\end{example}

Permit me to introduce two more notations for states. First, recall that agents in resource theories can create \emph{free states} at zero cost. With each quasiclassical state $R  :=  \bm{(}   \rho, (H), X_1, \ldots, X_j   \bm{)}$ is associated a free state 
$G_R  :=  \bm{(}   \gamma_R, (H), X_1, \ldots, X_j   \bm{)}$, wherein $\gamma_R$ has the form detailed in Sec.~\ref{section:FreeStates}. In quasiclassical notation,
$G_R  =  \bm{(}   g_R, (H), X_1, \ldots, X_j   \bm{)}$.
Second, a composition of states 
$R :=  \bm{(}   \rho,  (X_{0_R}),  \ldots,  X_{j_R}   \bm{)}$ and 
$S  :=  \bm{(}   \sigma, (X_{0_S}),  \ldots,  X_{j_S}   \bm{)}$ will be denoted by
\begin{equation*}
   R  +  S   =   \bm{(}   \rho  \otimes  \sigma,   (X_{0_R}  +  X_{0_S}),   
                          X_{1_R}  +  X_{1_S},   \ldots,  X_{j_R}  +  X_{j_S}   \bm{)},
\end{equation*}
wherein
$X_{i_R}  +  X_{i_S}   =   ( X_{i_R} \otimes \id )   +   ( \id \otimes X_{i_S} )$.

%
%
% Fixed-eigensubspace condition
%
%
\subsection{Fixed-eigensubspace condition}
\label{section:Fixed_eigen}

The system operators $X_{j+1}, \ldots, X_k$ do not specify states. In Helmholtz theories, for example, $V$ and $N$ do not characterize any state $(\rho, H)$. To understand the role played by nonstate system operators in thermodynamic resource theories, we can return to  thermodynamics. The change in a free energy bounds the average, over many trials, of the work extractable during a constant-$(p_1, \ldots, p_j, X_{j+1}, \ldots, X_k)$ process. For example, 
$W \leq \Delta F$. This constancy of the $X_{j+1}, \ldots, X_k$ suggests that the ``action'' in each thermodynamic resource theory takes place in one eigensubspace shared by $X_{j+1}, \ldots, X_k$.

% Proposition: Fixed-eigensubspace condition
\begin{proposition}[Fixed-eigensubspace condition]
\label{proposition:Fixed}
Let $R  :=  \bm{(}   \rho, (H), X_1, \ldots, X_j   \bm{)}$ denote the state of any system $\mathcal{S}$ specified by $\mathcal{H}$ and by 
$\bm{(}   (H), X_1, \ldots, X_k   \bm{)}$  in $\mathcal{T}^{(\beta), p_1, \ldots, p_j}$. The support $\supp(\rho)$ occupies an eigenspace $\mathcal{H}_0$ of $X_{j+1}$ that coincides with an eigenspace of $X_{j+2}$ and with an eigenspace of $X_{j+3}$ and so on for all 
$X_{i = j+1, \ldots, k}$:
\begin{equation} 
   \supp(\rho)  \subseteq  \mathcal{H}_0.
\end{equation}
All free unitaries $U$ (defined in Sec.~\ref{section:FreeOps}) preserve $\mathcal{H}_0$:
$\ket{\psi} \in \mathcal{H}_0
\; \Rightarrow \;
U \ket{\psi} \in \mathcal{H}_0$.
\end{proposition}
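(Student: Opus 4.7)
The plan is to derive the fixed-eigensubspace condition from the physical role that $X_{j+1}, \ldots, X_k$ play in $\mathcal{T}^{(\beta), p_1, \ldots, p_j}$: these are precisely the extensive variables that are \emph{not} exchanged with any bath and are held constant throughout every admissible process. My first step is to translate this classical constancy into operator language. Since the $X_i$'s for $i = j+1, \ldots, k$ are assumed (for simplicity) to commute and to have discrete spectra, they share a joint eigenbasis, and the Hilbert space decomposes as $\mathcal{H} = \bigoplus_\alpha \mathcal{H}_\alpha$, where each $\mathcal{H}_\alpha$ is a common eigenspace with eigenvalues $(x_{j+1}^{(\alpha)}, \ldots, x_k^{(\alpha)})$. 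The candidate $\mathcal{H}_0$ is then a single such joint eigenspace, selected by the values of the non-state variables specified for the system.

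For the second claim, that free unitaries $U$ preserve $\mathcal{H}_0$, I would appeal directly to the conservation content of the definition of a free operation in Sec.~\ref{section:FreeOps}. A free unitary models a process which conserves every non-state extensive quantity, which at the operator level is the statement $[U, X_i] = 0$ for all $i = j+1, \ldots, k$. Any unitary commuting with a Hermitian operator permutes that operator's eigenspaces, and since it must preserve every eigenvalue simultaneously it sends each joint eigenspace $\mathcal{H}_\alpha$ back to itself; in particular $U \ket{\psi} \in \mathcal{H}_0$ whenever $\ket{\psi} \in \mathcal{H}_0$.

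For the support statement, I would argue that every state $R$ in the theory is reachable by free operations acting on free states augmented with the state under consideration, and one specifies the system by declaring the values of the non-state variables. The free-state density operator $\gamma_R$ derived in Sec.~\ref{section:FreeStates} is a function only of $(H), X_1, \ldots, X_j$, hence commutes with each $X_i$ for $i > j$, and one restricts it to a single joint eigensubspace to fix the values of these conserved quantities. Combined with the preservation of $\mathcal{H}_0$ under free unitaries established above, an induction over the steps of any free protocol shows that $\supp(\rho)$ cannot leak out of $\mathcal{H}_0$.

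The main obstacle is conceptual rather than technical: one must justify promoting the classical statement ``$X_{j+1}, \ldots, X_k$ are held constant'' to the strict quantum condition that free unitaries commute with these operators, rather than the weaker requirement of constancy on average. The justification is precisely the setup of the resource theory — no bath or external field coupled to $X_{j+1}, \ldots, X_k$ is available to the agent, so there is no mechanism by which fluctuations in these quantities could be absorbed, forcing the stronger operator-level conservation. Once this interpretive step is made, the rest of the proof reduces to elementary linear algebra about commuting Hermitian operators and their common eigenspaces.
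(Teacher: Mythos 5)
There is a genuine mismatch here, but it is less a technical gap than a category error: Proposition~\ref{proposition:Fixed} is not proved in the paper at all. It is a framework postulate. The text motivates it only by analogy with the thermodynamic constancy of $X_{j+1},\ldots,X_k$ during a constant-$(p_1,\ldots,p_j,X_{j+1},\ldots,X_k)$ process, and later states explicitly that ``the existence of these $X_i$ can be viewed as an assumption implicit in thermodynamic resource theories.'' The second claim of the proposition is then *built into* Definition~\ref{definition:EquilOps}: a unitary is free only if it satisfies Eq.~\eqref{eq:ConstrainU} \emph{and} the fixed-eigensubspace condition, the latter being imposed as a separate stipulation. Your attempt to derive the proposition therefore proves something the paper deliberately assumes.

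The derivation you offer is also circular where it matters. You ground the preservation of $\mathcal{H}_0$ in the claim that free unitaries satisfy $[U, X_i] = 0$ for $i = j+1,\ldots,k$, attributing this to ``the conservation content of the definition of a free operation.'' But Definition~\ref{definition:EquilOps} requires commutation only with the \emph{state} operators $X_{i = (0),1,\ldots,j}$; conservation of the non-state operators is exactly what Proposition~\ref{proposition:Fixed} asserts, so you are assuming a strengthening of the conclusion. You flag this yourself as ``the main obstacle,'' and the physical argument you give for it (no bath couples to these variables, so fluctuations cannot be absorbed) is the paper's own motivation for the postulate, not a proof. The support claim fares no better: resource states are by definition \emph{not} reachable from free states by free operations, so an induction over free protocols cannot establish that every state's $\supp(\rho)$ lies in a single joint eigenspace; that, too, is a restriction on which tuples $\bm{(}\rho,(H),X_1,\ldots,X_j\bm{)}$ are admitted as states. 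The correct reading is that the proposition delimits the framework, and your linear-algebra observations (commuting operators share a joint eigenbasis; a unitary commuting with them preserves each joint eigenspace) describe its internal consistency rather than derive it.
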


The fixed-eigensubspace condition has previously been mentioned in a specialized context. The resource theory for entropy has been portrayed as a Helmholtz theory in which all Hamiltonians are trivial: $H = 0$~\cite{HHOLong,GourMNSYH13,BrandaoHNOW14}. This portrayal was generalized~\cite{YungerHalpernR14}: The entropy theory is equivalent to a Helmholtz theory in which (i) every $\supp( \rho )$ occupies one energy eigensubspace $\mathcal{H}_0$ and (ii) $\mathcal{H}_0$ is the only subspace transformed nontrivially by free unitaries. The fixed-eigensubspace condition generalizes and sharpens the claims in~\cite{HHOLong,GourMNSYH13,BrandaoHNOW14,YungerHalpernR14}: The entropy theory (having the natural variables $E$, $V$, and $N$) is equivalent to a Helmholtz theory in which (i) every state's $\supp(\rho)$ occupies one eigensubspace $\mathcal{H}_0$ shared by the state's $H$, $V$, and $N$ and (ii) $\mathcal{H}_0$ is the only subspace transformed nontrivially by any free unitary. The entropy theory models physical transformations that conserve the total energy, volume, and particle number.

Not only the entropy theory, but also every other thermodynamic resource theory, can be viewed as having ``behind-the-scenes'' $X_i$'s. A Helmholtz-theory system corresponds to a $V$ and an $N$ that share an eigensubspace $\mathcal{H}_0$ in which $\supp(\rho)$ remains. ``All the action takes place'' in $\mathcal{H}_0$.  % Generalizing thermodynamic resource theories brings these $X_i$ to light. 

The fixed-eigensubspace $X_i$ do not directly affect previously established results. Yet these $X_i$ matter for four reasons: (1) These $X_i$
clarify the relationship between the much-used Helmholtz and entropy theories. 
(2) The $N$'s in Helmholtz theories invite a reconsideration of what ``one-shot statistical mechanics'' means (as discussed in Sec.~\ref{section:Discussion}.\ref{section:OneShot}). 
(3) The existence of these $X_i$ can be viewed as an assumption implicit in thermodynamic resource theories. Identifying one's assumptions is advisable.
(4) These $X_i$ must exist for resource theories to model thermodynamics in all its natural-variable--containing glory.

%
%
% Free operations
%
%
\section{Free (equilibrating) operations}
\label{section:FreeOps}

In each resource theory, an agent can perform certain operations for free, without expending resources. The free operations in general thermodynamic resource theories are here termed \emph{equilibrating operations}, as in~\cite{YungerHalpernR14}.

\begin{definition}[Equilibrating operations]   
\label{definition:EquilOps}
Each \emph{equilibrating operation} $\mathcal{E}$ on any state 
$R  :=  \bm{(}   \rho,  (H_R),   X_{1_R},  \ldots,  X_{j_R}   \bm{)}$ consists of three steps:
\begin{enumerate}
   \item composition with any free state 
           $G  :=  \bm{(}   \gamma, (H_G), X_{1_G}  \ldots,  X_{k_G}   \bm{)}$;
   \item the transformation of $\rho \otimes \gamma$ by any unitary that commutes with $X_{i_R}  +  X_{i_G}$ for all $i = (0), 1, \ldots, j$ and that satisfies the fixed-eigensubspace condition; and
   \item the discarding of any component system $A$ associated just with its own system operators.
\end{enumerate}
$\mathcal{E}$ has the form
\begin{equation}
   R  \xmapsto{equil.}  \mathcal{E}(R)   
   =   \Big(    \tr_A ( U [ \rho  \otimes  \gamma ] U^\dag ),
                      \bm{(}  \tr_A ( H_R  +  H_G )   \bm{)},   
                          \tr_A ( X_{1_R}  +  X_{1_G} ),   \ldots,   
                          \tr_A ( X_{j_R}  +  X_{j_G} )   \Big),
\end{equation}
wherein
\begin{equation}   \label{eq:ConstrainU}
   [U,   X_{i_R}  +  X_{i_G}]  =  0   \; \;  \forall  i  =  (0), 1, \ldots, j
\end{equation}
and $U$ obeys the fixed-eigensubspace condition. That is, if $\supp (\rho \otimes \gamma)$ occupies an eigensubspace $\mathcal{H}_0$ shared by all the $X_{i_R} + X_{i_G}$, $U$ preserves $\mathcal{H}_0$:
\begin{equation}   \label{eq:FixedEigen}
   \ket{ \psi }  \in  \mathcal{H}_0
   \quad \Rightarrow \quad
   U \ket{ \psi } \in \mathcal{H}_0.
\end{equation}
\end{definition}

Equation~\eqref{eq:ConstrainU} distinguishes general thermodynamic resource theories from Helmholtz theories. $\mathcal{T}^{\beta, p_1, \ldots, p_j}$ might, \emph{prima facie}, appear equivalent to a Helmholtz theory whose Hamiltonians are replaced by the effective Hamiltonians  
\mbox{$\tilde{H}  =  H  -  \sum_{i = 1}^j  p_j X_j$.} But free unitaries in Helmholtz theories preserve $\tilde{H}$, whereas free unitaries in $\mathcal{T}^{\beta, p_1, \ldots, p_j}$ preserve $X_i$ for all $i = 0, 1, \ldots, j$. Because $[U, \tilde{H}] = 0$ does not imply $[U, X_i]  =  0$, more unitaries are free in the  Helmholtz theory than in $\mathcal{T}^{\beta, p_1, \ldots, p_j}$.

Free operations tend to evolve states toward free states. The free states, as shown in Sec.~\ref{section:FreeStates}, are equilibrium states. Hence the name \emph{equilibrating operations}.
Free operations induce a quasiorder on states, as explained in Sec.~\ref{section:Quasiorder}.

\begin{example}
Definition~\ref{definition:EquilOps} can be shown to reduce to the free operations defined previously for the Helmholtz and entropy theories (except that the fixed-eigensubspace condition does not appear explicitly in earlier definitions). In Helmholtz theories, just one extensive-variable operator, $H$, characterizes each state. By Definition~\ref{definition:EquilOps}, therefore, all free unitaries satisfy $[U, H_R  +  H_G]  =  0$. This restriction appears in earlier definitions of Helmholtz theories' free operations, which have been called \emph{thermal operations}~\cite{Janzing00,BrandaoHORS13}. 
In the entropy theory, no extensive-variable operators characterize states, so Definition~\ref{definition:EquilOps} does not restrict free unitaries. Neither do earlier definitions of entropy-theory free operations, which have been called \emph{noisy operations}~\cite{HHOShort,GourMNSYH13}.

Definition~\ref{definition:EquilOps} illustrates (the previously known reason) why the entropy theory is not a Helmholtz theory in which $\beta = 0$ or $\infty$. Equation~\ref{eq:ConstrainU} constrains the $U$'s that are free in Helmholtz theories but not the $U$'s that are free in the entropy theory. Setting $\beta = 0, \infty$ in a Helmholtz theory does not lift the constraint---does not reduce Helmholtz-theory equilibrating operations to entropy-theory equilibrating operations. % Imposing the fixed-eigensubspace condition [Eq.~\ref{eq:FixedEigen}] does. The constraint can be lifted less generally by requiring that all $H = 0$, as in~\cite{HHOShort,HHOLong,BrandaoHNOW14,GourMNSYH13}.
\end{example}

%
%
% Free states
%
%
\section{Free (equilibrium) states}
\label{section:FreeStates}

The definition of equilibrating operations refers to free states but not to the forms that free states can assume. Using Definition~\ref{definition:EquilOps}, I will show that the only density operators that can be free in nontrivial quasiclassical thermodynamic resource theories are equilibrium ensembles. A resource theory will be called \emph{trivial} if free operations alone can generate states that do not appear explicitly as free states in the definition of free operations. Recall from Eqs.~\eqref{eq:SIntensive} and~\eqref{eq:SIntensive2} that $F_i$ denotes the entropy intensive variable conjugate to $X_i$. 

% Theorem: Free states
\begin{theorem}   \label{theorem:FreeStates}
Consider a thermodynamic resource theory 
$\mathcal{T}^{(\beta), p_1, \ldots, p_j}$ in which all states are quasiclassical.
Every free state has the form
\begin{align}
   G :=
   \bm{(} g, (X_0), X_1, \ldots, X_j \bm{)},
\end{align}
wherein element $\alpha$ of $g$ has the form
\begin{align}   \label{eq:FreeState}
   g_\alpha =  e^{ - \frac{1}{ k_B } 
           \left(  F_0 x_{0_{ \alpha }}    +  F_1 x_{1_{ \alpha }}    
                   + \ldots + F_j x_{j_{ \alpha }} \right) } / Z ,
\end{align}
the partition function $Z$ normalizes $g$, and $x_{i_{ \alpha }}$ denotes the $\alpha^{\rm th}$ eigenvalue of operator $X_i$. ($x_{i_\alpha}$ corresponds to the eigenstate $\ket{\alpha}$ of $X_i$ that equals the eigenstate associated with eigenvalue $x_{j_\alpha}$ of $X_j$, for all $i$ and $j$.) If any other state were free, $\mathcal{T}^{(\beta), p_1, \ldots, p_j}$ would be trivial.
\end{theorem}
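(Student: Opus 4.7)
The plan is to exploit the requirement that the set of free states be closed under equilibrating operations: if some free equilibrating operation could output a state not already declared free, then the theory would be trivial in the sense of the statement. I would start from an arbitrary candidate quasiclassical free state $G = \bm{(} g, (H), X_1, \ldots, X_j \bm{)}$, assumed diagonal in a joint eigenbasis of the state operators, and pin down the dependence of the eigenvalues $g_\alpha$ on the tuple of eigenvalues $(x_{0_\alpha}, \ldots, x_{j_\alpha})$.

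First I would argue that $g_\alpha$ can depend on $\alpha$ only through this tuple. Any two eigenstates $\ket{\alpha}, \ket{\beta}$ that share the same conserved eigenvalues and lie in a common fixed $X_{j+1}, \ldots, X_k$-eigensubspace may be interchanged by a unitary $U$ obeying $[U, X_i] = 0$ for all $i = (0), 1, \ldots, j$ and preserving the fixed eigensubspace, so $U$ is free by Definition~\ref{definition:EquilOps}. If $g_\alpha \neq g_\beta$, then $U g U^\dag$ is a different density operator that must also correspond to a free state; composing $G$ with itself and tracing out a subsystem could then convert $g$ into any convex combination of its coordinate permutations, generating infinitely many states not listed as free and making the theory trivial. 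Hence $g_\alpha$ depends only on $(x_{0_\alpha}, \ldots, x_{j_\alpha})$.

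Next I would derive the exponential form by considering $G + G$. A free unitary on the composite may swap any two microstate pairs $(\alpha_1, \alpha_2)$ and $(\beta_1, \beta_2)$ whose totals $x_{i_{\alpha_1}} + x_{i_{\alpha_2}} = x_{i_{\beta_1}} + x_{i_{\beta_2}}$ agree for every conserved $i = (0), 1, \ldots, j$. For the marginal produced after tracing out a factor to remain in the candidate free set rather than produce new free states, the joint probabilities must match:
\begin{equation}
   g_{\alpha_1} g_{\alpha_2} = g_{\beta_1} g_{\beta_2}
   \quad \text{whenever} \quad
   x_{i_{\alpha_1}} + x_{i_{\alpha_2}} = x_{i_{\beta_1}} + x_{i_{\beta_2}} \; \forall i.
\end{equation}
Taking logarithms converts this into a Cauchy-type functional equation on $\ln g$ as a function of the eigenvalue tuple, and its general solution is affine: $\ln g_\alpha = -\sum_{i=0}^{j} c_i \, x_{i_\alpha} + \mathrm{const}$. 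After normalization this is precisely Eq.~\eqref{eq:FreeState}, with the partition function $Z$ absorbing the constant.

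Finally I would identify the coefficients $c_i$ with the entropy intensive variables $F_i / k_B$ by invoking the thermodynamic parameters that label the theory. Each theory $\mathcal{T}^{(\beta), p_1, \ldots, p_j}$ is distinguished by the intensive variables of the bath it models; the Legendre dictionary recalled in Sec.~\ref{section:FreeEnergyIntro} and Eq.~\eqref{eq:SIntensive2} sends these to the $F_i$, which by construction must be the coefficients appearing in the equilibrium Gibbs form for that bath. I expect the principal obstacle to be the Cauchy-equation step: the functional equation directly pins down $\ln g$ only on the set of eigenvalue tuples actually realized in the spectrum, and degeneracies or sparse spectra could leave the form underdetermined. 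The standard remedy, which I would adopt, is to invoke $n$-copy composites for large $n$: tensoring many copies of $G$ produces arbitrarily fine grids of total eigenvalues, so equalities enforced by free swaps on these composites fix the affine form throughout the relevant spectrum and propagate it back to the single-copy state.
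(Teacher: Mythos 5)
Your proposal follows essentially the same route as the paper's proof, which is itself a three-step sketch deferring to \cite{YungerHalpernR14} and \cite{HHOLong}: first pin the free distribution down within each joint eigensubspace of the conserved operators, then extract the exponential dependence on the conserved eigenvalues by acting with free, conservation-respecting unitaries on composites of free states and demanding closure of the free set. Your two-copy relation $g_{\alpha_1} g_{\alpha_2} = g_{\beta_1} g_{\beta_2}$ whenever the conserved totals agree is precisely the mechanism behind the paper's statement that the ratio $g(E+\Delta)/g(E)$ depends only on the gap and varies as $e^{-\beta \Delta}$, generalized from $H$ to every $X_i$; your remark that sparse or degenerate spectra underdetermine the affine form, remedied by $n$-copy composites, makes explicit a point the paper leaves implicit.

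The one substantive divergence, and a genuine soft spot, is your justification of triviality in the first step. The paper inherits from \cite{HHOLong} the Shannon-compression argument: from many copies of a non-uniform putative free state the agent distills a near-pure state, and purity plus equilibrating operations generates arbitrary states, so the theory collapses outright. You instead argue that a permutation of degenerate eigenstates, or a swap on $G+G$ followed by a partial trace, ``produces new free states not listed as free.'' That is weaker: one could declare the free set to be the closure of $g$ under permutations, convex combinations, and partial traces of conserving unitaries on composites, and your argument does not show that this closed set must contain genuinely resourceful states. The compression argument does, because it exhibits approximately pure states inside the closure, and those majorize everything. If you replace your bare non-closure claim with the distillation argument (which you already half-invoke via $n$-copy composites for the Cauchy step), your proof matches the paper's in both structure and strength.
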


The proof of Theorem~\ref{theorem:FreeStates} generalizes from~\cite{YungerHalpernR14} almost trivially and appears in Appendix~\ref{section:FreeStateApp}. As noted in~\cite{YungerHalpernR14}, the proof offers an operational alternative to canonical-ensemble derivations that depend on postulates such as the Fundamental Assumption of Statistical Mechanics. According to the Fundamental Assumption, microcanonical ensembles represent isolated systems' equilibrium states. The microcanonical form of the free states in the entropy theory can be derived from the definition of free operations and from the theory's nontriviality~\cite{HHOLong}. This operational derivation replaces the Fundamental Assumption, which has drawn criticism (e.g.,~\cite{JaynesII57,Lenard78}).
% Reference: Lenard 1978 Ñ p. 583

\begin{example} \label{ex:FreeStates}
If $\mathcal{T}^{(\beta), p_1, \ldots, p_j} = \mathcal{T}$ denotes the entropy theory, the argument of the exponential in Eq.~\eqref{eq:FreeState} vanishes. The free state $g$ is the microcanonical ensemble $( \underbrace{ \frac{1}{d}, \ldots, \frac{1}{d} }_d )$, as in~\cite{HHOLong}. 
In Helmholtz theories, free states' density operators have the form $e^{- \beta H} / \tr(e^{- \beta H})$; and in grand-potential theories, $e^{- \beta (H - \mu N) }/ \tr(e^{- \beta (H - \mu N) })$. 
\end{example}

% Whenever multiple $Z$'s arise in one context, the variables on which they depend will distinguish them. For example, 
% $Z_{ \beta, p_1, \ldots, p_j }(H, X_1, \ldots, X_j)$ may denote a partition function.

%
%
% Quasiorder
%
%
\section{Quasiorder of states}
\label{section:Quasiorder}

Equilibrating operations induce a quasiorder on states. The quasiorder on quasiclassical states is shown to be equivalent to $d$-majorization relative to equilibrium states, termed \emph{equimajorization}. Rescaled Lorenz curves illustrate equimajorization. Much of this section immediately generalizes~\cite{YungerHalpernR14}, which generalizes~\cite{Janzing00,FundLimits2}.

%
% Quasiorder theorems
%
\subsection{Equivalence of two quasiorders}

A \emph{quasiorder} on a set $\mathscr{S}$ is a binary operation  $\leq$   that satisfies reflexivity and transitivity: For all $A, B, C \in \mathscr{S}$, 
$A  \leq  A$; and if $A \leq B$ and $B \leq C$, then $A \leq C$~\cite{MarshallOA10}. 
Equilibrating operations define a quasiorder $\xmapsto{equil.}$ on the states in each thermodynamic resource theory. If equilibrating operations can transform $R$ into $S$, then $R \xmapsto{equil.} S$.

The quasiorder on quasiclassical states will be shown to be equivalent to $d$-majorization. 
A matrix $M$ is called \emph{$d$-stochastic} if it preserves some vector $d$:
\begin{align}
   M  d   =   d,   
   \quad   {\rm and}   \quad
   \sum_{i}  M_{ij}  =  1   \; \;  \forall i
\end{align}
\cite{MarshallOA10}.
A vector $r$ \emph{$d$-majorizes} a vector $s$, $r \geq_d s$, if some $d$-stochastic matrix $M$ maps $r$ to $s$:   $M  r   =   s$~\cite{MarshallOA10}. 
% Reference in MO (rather than in MOA): p. 421
The $d$'s relevant to $\mathcal{T}^{(\beta), p_1, \ldots, p_j}$, are the equilibrium states relative to $\bm{(}  (T), p_1, \ldots, p_j  \bm{)}$. 
Matrices $M$ that preserve such equilibrium states will be called \emph{equistochastic}, as in~\cite{YungerHalpernR14}. $d$-majorization relative to the uniform distribution $( \underbrace{ \frac{1}{d}, \ldots, \frac{1}{d} }_d )$ is majorization, the quasiorder in the resource theory for entropy~\cite{HHOShort,HHOLong,GourMNSYH13}.

%
% Definition: Equimajorization
%
\begin{definition} \label{definition:Equimajorization}
Let $\mathcal{T}^{(\beta), p_1, \ldots, p_j}$ denote a thermodynamic resource theory in which the states $R := \bm{(} r,  (H),  X_{1}, \ldots, X_{j}  \bm{)}$ and 
$S  :=  \bm{(}   s,   (H),  X_{1},   \ldots,   X_{j}   \bm{)}$  share their extensive-variable state operators. $R$ \emph{equimajorizes} $S$, written $R  \succ_{(\beta), p_1, \ldots, p_j}   S$, if some stochastic matrix $M$ that preserves the equilibrium state 
$G_R  :=  \bm{(}   g_R,  (H),  X_1,   \ldots,   X_j   \bm{)}$ maps $r$ to $s$. That is, if
\begin{equation}
   M r  =  s,
   \quad
   M g_R  =  g_R,
   \quad {\rm and} \quad
   \sum_i   M_{ij}   =   1,
\end{equation}
then $R  \succ_{(\beta), p_1, \ldots, p_j}   S$.

Let $R' := \bm{(} r',  (H_R),  X_{1_R}, \ldots, X_{j_R} \bm{)}$ and 
$S' := \bm{(} s',  (H_S),  X_{1_S}, \ldots, X_{j_S} \bm{)}$ denote states that do not share all their extensive-variable state operators. $R'$ equimajorizes $S$ if some stochastic matrix $M$ that preserves the equilibrium state
\begin{equation*}
   G_R  +  G_S  
   :=  \bm{(}   g_R  \otimes  g_S,    (H_R + H_S),   X_{1_R}   +   X_{1_S},    \ldots, 
         X_{j_R}  +  X_{j_S}   \bm{)}
\end{equation*}
maps $r'  \otimes  g_{S}$  to  $g_R  \otimes  s'$:
\begin{equation}
   M (r'  \otimes  g_{S})  =  (g_R  \otimes  s'),
   \quad
   M (g_R   \otimes  g_S)  =  (g_R   \otimes  g_S),
   \quad {\rm and} \quad
   \sum_i   M_{ij}   =   1
\end{equation}
implies  $R'  \succ_{(\beta), p_1, \ldots, p_j}  S'$.
\end{definition}

The second definition invokes a technique used in~\cite{HHOLong,GourMNSYH13} to compare entropy-theory states $r'$ and $s'$ defined on Hilbert spaces that have different dimensions. $r'$ can be composed with the free state $g_S$; the composite $r' \otimes g_S$, compared to $g_R \otimes s'$; and the first subsystem, discarded. The two quasiorders in $\mathcal{T}^{(\beta), p_1, \ldots, p_j}$ are equivalent.

%
% Theorem: Quasiorders equivalent
%
\begin{theorem}   \label{theorem:Quasiorder}
Let $R$ and $S$ denote any quasiclassical states in $\mathcal{T}^{(\beta), p_1, \ldots, p_j}$. An equilibrating operation maps $R$ to $S$ if and only if $R$ equimajorizes $S$:
\begin{equation}
   R \xmapsto{equil.}  S
   \quad  \Longleftrightarrow  \quad
   R \succ_{(\beta), p_1, \ldots, p_j} S.
\end{equation}
\end{theorem}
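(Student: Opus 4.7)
My plan is to prove both implications separately, adapting arguments from the Helmholtz-theory case~\cite{Janzing00,FundLimits2} and the grand-potential generalization~\cite{YungerHalpernR14}. Throughout I would exploit quasiclassicality: by Theorem~\ref{theorem:FreeStates}, the density operators, the state operators $X_i$, and the free states $\gamma$ all share a common eigenbasis, so every question reduces to one about classical probability vectors connected by stochastic matrices. The only new ingredient relative to prior proofs is simultaneous bookkeeping for multiple commuting conserved quantities.

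For the forward direction ($\Rightarrow$), suppose an equilibrating operation transforms $R$ into $S$. By Definition~\ref{definition:EquilOps}, there exist a free state $\gamma$ and a unitary $U$ with $[U,\, X_{i_R} + X_{i_G}] = 0$ for all relevant $i$ such that $\sigma = \tr_A\bigl(U(\rho \otimes \gamma)U^\dagger\bigr)$. I would diagonalize $\rho \otimes \gamma$ in the common eigenbasis of the sum operators and read off a stochastic matrix $M$ with $Mr = s$. To see that $M$ preserves the equilibrium state, I would substitute $\rho \mapsto g_R$: the tensor product $g_R \otimes \gamma$ is itself an equilibrium state with respect to the sum operators (the Gibbs exponentials factorize across the tensor factors), so $U$ leaves it invariant by construction and the partial trace returns $g_R$. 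The mismatched-operator case ($R$ and $S$ not sharing all $X_i$'s) is handled by composing $R$ with $G_S$ up front, as in Definition~\ref{definition:Equimajorization}, so that all tuples are aligned in the enlarged system.

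For the backward direction ($\Leftarrow$), I would start from any stochastic $M$ satisfying $Mr = s$ and $Mg_R = g_R$, and realize it by an equilibrating operation via a thermodynamic Stinespring-type dilation. I introduce an ancilla in a Gibbs free state whose spectrum encodes the conditional transition probabilities of $M$, and implement $M$ by a unitary that permutes basis vectors within each joint level set, i.e., each shared degenerate eigenspace of the sum operators $X_{i_R} + X_{i_G}$. Such permutations commute with every conserved operator automatically and satisfy the fixed-eigensubspace condition~\eqref{eq:FixedEigen}; the Gibbs-preservation identity $Mg_R = g_R$ is precisely what lets the ancilla weights be chosen consistently so that marginalization reproduces $M$.

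The hard part is this dilation step: the ancilla and unitary must simultaneously conserve \emph{every} $X_{i_R} + X_{i_G}$, not just a single Hamiltonian, and must still recover $M$ after partial trace. The structural fact that makes this tractable is that the commuting $X_i$'s share eigenbases, so the joint level sets are well defined and the Gibbs-preservation of $M$ forces $M$ to act separately within each ``thermomajorization block'' of levels. This is the handle that lets the constructions of~\cite{Janzing00,FundLimits2,YungerHalpernR14} transplant essentially verbatim, with the single Hamiltonian replaced by a tuple of commuting conserved operators.
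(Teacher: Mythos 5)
Your proposal is correct and follows essentially the same route as the paper: the paper's proof is a reduction to the Helmholtz-theory argument of Janzing \emph{et al.} (via the grand-potential case of Yunger Halpern and Renes), whose key steps are exactly the ones you spell out --- reading off a Gibbs-preserving stochastic matrix from the dilation in the forward direction, and realizing such a matrix by permutations within joint eigenspaces of the commuting conserved operators in the reverse direction. The one substantive point the paper emphasizes --- that pure states sharing the combinatorial data $(u,v)$ now share the entire tuple of eigenvalues of $(H), X_1, \ldots, X_j$ rather than just the energy --- is the same observation you make about joint level sets, so no comparison beyond this is needed.
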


\begin{proof}
The proof immediately generalizes the proof of~\cite[Theorem 2]{YungerHalpernR14}, which concerns grand-potential theories. The grand-potential proof generalizes the proof of~\cite[Theorem 5]{Janzing00}, which concerns Helmholtz theories. The Helmholtz-theory proof centers on (i) a free operation $\pi_n$ that conserves energy and (ii) the equality of the energies of pure states that share a characteristic denoted by the pair $(u, v)$ of vectors. (In~\cite{Janzing00}, $(u, v)$ is denoted by $(r, s)$.) In grand-potential theories, an analog of $\pi_n$ conserves energy and particle number, and pure states that have the same $(u, v)$ have the same energy and particle number~\cite{YungerHalpernR14}. By the same token, in $\mathcal{T}^{(\beta), p_1, \ldots, p_j}$, an analog of $\pi_n$ that conserves multiple extensive properties can be constructed. Pure states associated with the same $(u, v)$ have the same set of eigenvalues of $(H), X_1, \ldots, X_j$. Replacing two statements about energies, in the proof of~\cite[Theorem 5]{Janzing00}, with statements about multiple $X_i$'s yields a proof of Theorem~\ref{theorem:Quasiorder}.
\end{proof}

%
% Lorenz curves
%
\subsection{Rescaled Lorenz curves}

Rescaled Lorenz curves illustrate the quasiorder on states. The curves have been introduced into Helmholtz theories~\cite{FundLimits2}, the entropy theory~\cite{GourMNSYH13}, and grand-potential theories~\cite{YungerHalpernR14}. These resource-theory applications have roots in works by Ruch, Schranner, Seligman, and others (e.g.,~\cite{RuchSS78}). Extant results are generalized concisely below.

%
% Figure: 1-state Lorenz curve
%
\begin{figure}[hbt]
\centering
\includegraphics[width=.55\textwidth, clip=true]{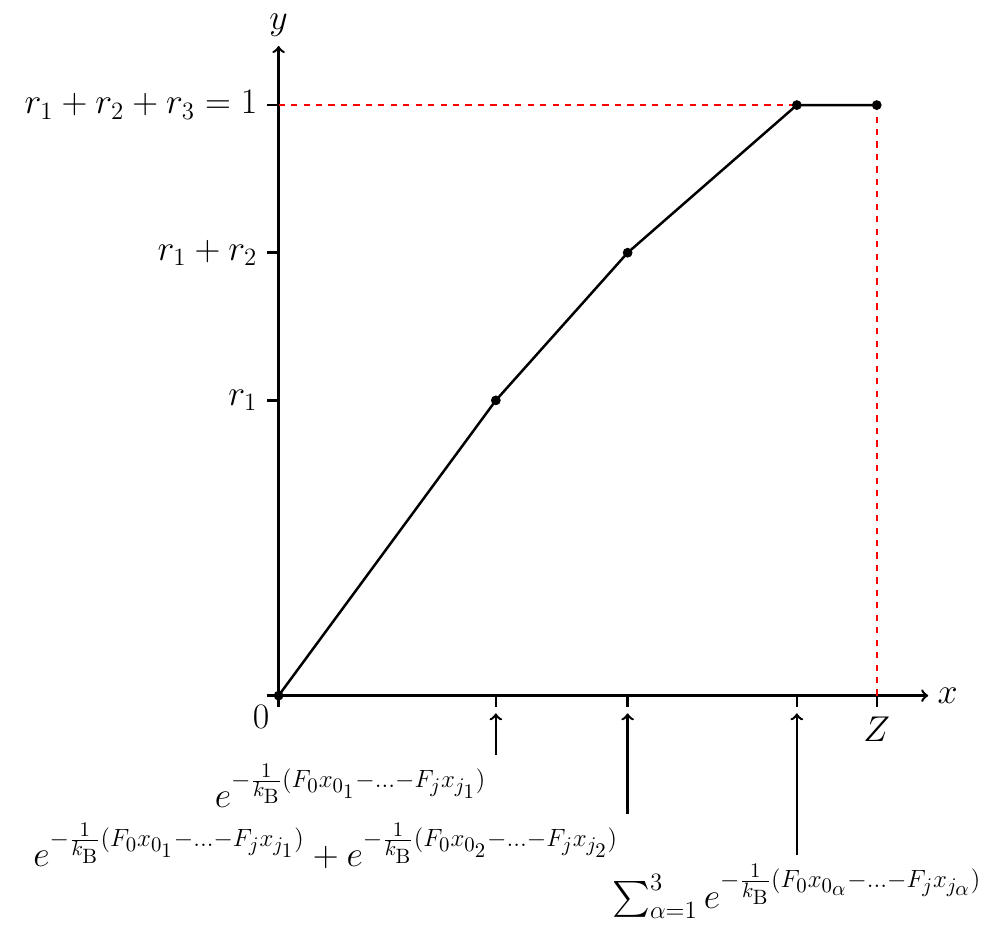}
\caption{
\caphead{Rescaled Lorenz curve $L_R$:} Visualization of a state 
$R := (r,  H,  X_1, \ldots, X_j)$ in a general thermodynamic resource theory
$\mathcal{T}^{(\beta), p_1, \ldots, p_j}$. $x_{i_{\alpha} }$ denotes eigenvalue $\alpha$ of the state operator $X_i$, and $F_i$ denotes the entropy intensive variable conjugate to $X_i$ [Eqs.~\eqref{eq:SIntensive} and~\eqref{eq:SIntensive2}]. If
$\mathcal{T}^{(\beta), p_1, \ldots, p_j}  =  \mathcal{T}$ denotes the entropy theory, each exponential's argument is replaced with a zero.}
\label{fig:StateCurve}
\end{figure}

\begin{definition}[Rescaled Lorenz curve]
Let $R   :=   \bm{(}   r,  (H),  X_{1},  \ldots,  X_{j}   \bm{)}$ denote any quasiclassical state in $\mathcal{T}^{(\beta), p_1, \ldots, p_j}$.
Let $G_R$ denote the corresponding equilibrium state, and let $Z$ denote the partition function of $G_R$. Suppose that $X_i$, for each $i = (0), 1, \ldots, j$, has $d$ discrete, not necessarily distinct, eigenvalues 
$x_{i_1}, \ldots, x_{i_d}$.
The $X_i$ eigenstate associated with eigenvalue $x_{i_\alpha}$ is assumed to equal the $X_j$ eigenstate associated with eigenvalue $x_{j_\alpha}$, for all $i$, $j$, and $\alpha$. Let $r  =  (r_1, \ldots, r_d)$ be ordered such that
\begin{align}
   r_1  e^{ \frac{1}{\kB}   [ 
      (F_0  x_{0_1})  +  F_1 x_{1_1}  +  \ldots  +  F_j x_{j_1}   ] }
   \geq   \ldots   \geq
   r_d  e^{ \frac{1}{\kB}  [ 
      (F_0  x_{0_d})  +  F_1 x_{1_d}  +  \ldots  +  F_j x_{j_d}  ] },
\end{align}
wherein $F_i$ denotes the entropy intensive variable conjugate to $X_i$.

Define the point $P_0  :=  (0, 0)$ and the points
\begin{align}   \label{eq:Elbows}
   P_m  :=   \left(
   \sum_{\alpha = 1}^m   e^{ - \frac{1}{\kB}   [ 
      (F_0  x_{0_\alpha})  +  F_1 x_{1_\alpha}  +  \ldots  +  F_j x_{j_\alpha}  ] },
   \sum_{\alpha = 1}^m   r_\alpha   \right)
\end{align}
for all $m = 1, \ldots, d$. The \emph{Lorenz curve $L_R^{(\beta), p_1, \ldots, p_j}$
for $R$, rescaled relative to $G_R$}, is the piecewise linear curve, defined on 
$x \in [0, Z]$, that consists of $P_{m = 0, 1, \ldots, d}$ and that interpolates between $P_m$ and 
$P_{m+1}$ for all $i = 0, \ldots, d-1$. 

If $\mathcal{T}^{(\beta), p_1, \ldots, p_j}  =  \mathcal{T}$ denotes the entropy theory, the argument of each exponential in Eq.~\eqref{eq:Elbows} is replaced with a zero.
\end{definition}

\noindent For simplicity, I will sometimes denote $L_R^{(\beta), p_1, \ldots, p_j}$ by $L_R$. 
% If the $X_i$ have continuous spectra (discussed in Appendix~\ref{section:Volume}), the sums in Eq.~\ref{eq:Elbows} should become integrals.
An example curve appears in Fig.~\ref{fig:StateCurve}.

Rescaled Lorenz curves encapsulate states' resourcefulness, in the sense detailed in~\cite{FundLimits2,GourMNSYH13}. The more 
$L_R$ bends outward from the straight line that represents the equilibrium state $G_R$, the more value $R$ has. Many functions, called \emph{monotones}, quantify a state's value~\cite{BrandaoHNOW14,GourMNSYH13,YungerHalpernR14}. Monotones include the work needed to create, and the work extractable from, a state.

%
% Figure: Comparison of states' curves
%
\begin{figure}[hbt]
\centering
\includegraphics[width=.45\textwidth, clip=true]{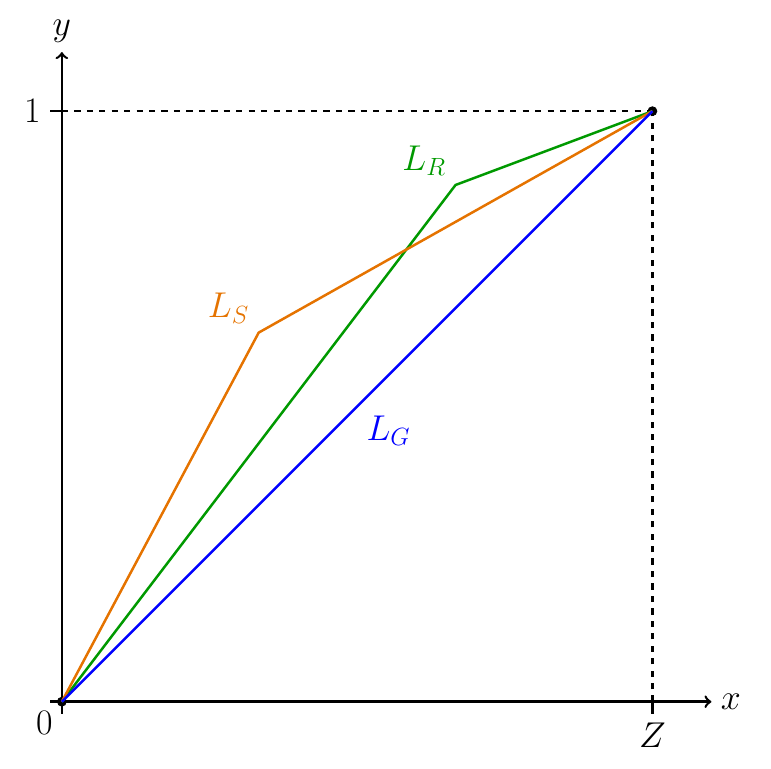}
\caption{\caphead{Comparison of rescaled Lorenz curves:}   The rescaled Lorenz curves $L_R$ and $L_S$ lie above the curve $L_G$. Equilibrating operations, therefore, can transform the states $R$ and $S$ into the free state $G$. Because $L_R$ lies partially below and partially above $L_S$, equilibrating operations can transform neither $R$ into $S$ nor $S$ into $R$. For simplicity, the state operators of $R$ (which represent extensive variables) are assumed to equal those of $S$ and those of $G$.}
\label{fig:CompareStates}
\end{figure}

%
% Rescaled Lorenz curves
%
\begin{theorem}   \label{theorem:LorenzCurves}
Let $R   :=   \bm{(}   r,  (H),  X_{1},  \ldots,  X_{j}   \bm{)}$ and 
$S  :=  \bm{(}   s,  (H),  X_{1},  \ldots,  X_{j}   \bm{)}$ denote quasiclassical states that share their extensive-variable state operators in the thermodynamic resource theory $\mathcal{T}^{(\beta), p_1, \ldots, p_j}$. Some equilibrating operation $\mathcal{E}$ can transform $R$ into $S$ if and only if the Lorenz curve for $R$, rescaled relative to the corresponding equilibrium state $G$, fails to dip below the rescaled Lorenz curve for $S$:
\begin{equation}
   \exists   \,   \mathcal{E}  \:  :  \:
   \mathcal{E}(R)  =  S
   \quad \Longleftrightarrow \quad
   L_R^{(\beta), p_1, \ldots, p_j}(x)
      \geq   L_S^{(\beta), p_1, \ldots, p_j}(x)
   \quad   \forall   x  \in  [0,  Z],
       %Z_{ (\beta), p_1, \ldots, p_j} \bm{(}  (H),  X_{1},  \ldots,  X_{j}  \bm{)}].
\end{equation}
wherein $Z$ denotes the partition function of $G$.

Let $R'  :=  \bm{(}   r,  (H_{R'}),  X_{1_{R'}},  \ldots,  X_{j_{R'}}   \bm{)}$ and 
$S'  :=  \bm{(}   s,  (H_{S'}),  X_{1_{S'}},  \ldots,  X_{j_{S'}}   \bm{)}$ denote quasiclassical states that do not share all their  $X_{i= (0), 1, \ldots, j}$. An equilibrating operation can transform $R'$ into $S'$ if and only if the Lorenz curve for  $R' + G_{S'}$, rescaled relative to $G_{R'} + G_{S'}$, fails to dip below the rescaled Lorenz curve for $G_{R'} + S'$:
\begin{align}
   \exists   \,   \mathcal{E}  \:  :  \:
   \mathcal{E}(R')  =  S'
   \quad \Longleftrightarrow \quad
   L_{R'  +  G_{S'} }^{(\beta), p_1, \ldots, p_j}(x)
      \geq   L_{G_{R'}  +  S'}^{(\beta), p_1, \ldots, p_j}(x)   
   \quad   \forall   x  \in  [0,  Z'],
\end{align}
wherein $Z'$ denotes the partition function for $G_{R'} + G_{S'}$.
\end{theorem}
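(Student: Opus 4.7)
The plan is to reduce Theorem~\ref{theorem:LorenzCurves} to Theorem~\ref{theorem:Quasiorder} combined with a standard geometric characterization of $d$-majorization. Theorem~\ref{theorem:Quasiorder} already identifies equilibrating-operation reachability with equimajorization, so all that remains is to show that equimajorization of $r$ by $s$ (relative to the equilibrium vector $g$) is equivalent to pointwise domination of the corresponding rescaled Lorenz curves. This equivalence is a standard result in majorization theory (see Marshall, Olkin, and Arnold~\cite{MarshallOA10}, cited in the paper), translated from the abstract vector setting into the thermodynamic-resource-theory setting where the reference vector $g$ is an equilibrium distribution given by Theorem~\ref{theorem:FreeStates}.

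For the first (equal-state-operator) case I would argue as follows. By Theorem~\ref{theorem:Quasiorder}, $\mathcal{E}(R) = S$ for some equilibrating $\mathcal{E}$ iff there is a column-stochastic matrix $M$ with $M r = s$ and $M g_R = g_R$. Writing each distribution in the basis that sorts the ratios $r_\alpha / g_{R,\alpha}$ in decreasing order, the rescaled Lorenz curve $L_R$ is exactly the concave piecewise-linear upper envelope whose slopes are those ratios, with horizontal increments equal to the equilibrium weights $g_{R,\alpha}$ (so the total length along the $x$-axis is $\sum_\alpha g_{R,\alpha} = Z$, matching the definition with the factors $e^{-(F_0 x_{0_\alpha} + \cdots + F_j x_{j_\alpha})/k_B}$). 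The classical fact is that existence of such a $g_R$-preserving stochastic $M$ carrying $r$ to $s$ is equivalent to $L_R(x) \geq L_S(x)$ for all $x \in [0,Z]$; one direction follows by checking that $g_R$-preserving stochastic matrices cannot increase any of the ``partial sums'' that define $L_S$, and the converse is obtained by explicitly constructing $M$ as a convex combination of equilibrium-preserving permutations and $T$-transforms whose action on the Lorenz diagram can be read off geometrically.

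For the second case, where $R'$ and $S'$ carry different extensive-variable state operators, I would invoke the composition trick already used in Definition~\ref{definition:Equimajorization}: composing with free states is an equilibrating operation, and by the definition of equimajorization across differing operators, $R' \succ_{(\beta),p_1,\ldots,p_j} S'$ is the same as $R' + G_{S'} \succ_{(\beta),p_1,\ldots,p_j} G_{R'} + S'$ in the shared-operator sense, since both composite states now live on the combined Hilbert space with identical extensive-variable operators $X_{i_{R'}} + X_{i_{S'}}$ and share the equilibrium state $G_{R'} + G_{S'}$. Applying the first case to these two composites yields the Lorenz-curve inequality $L_{R' + G_{S'}}(x) \geq L_{G_{R'} + S'}(x)$ on $[0, Z']$, where $Z'$ is the partition function of $G_{R'} + G_{S'}$.

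The main obstacle is bookkeeping rather than mathematics: one must verify that the definition of the rescaled Lorenz curve given in Eq.~\eqref{eq:Elbows}, whose horizontal steps are the Boltzmann-like weights $e^{-(F_0 x_{0_\alpha} + \cdots + F_j x_{j_\alpha})/k_B}$, is the correct reference for $d$-majorization against $g_R$ as defined by Theorem~\ref{theorem:FreeStates}, and that the ordering convention on the eigenbasis is consistent between $R$ and $S$ when they share state operators (so the ``sort in decreasing slope'' step does not mix components in incompatible ways). The entropy-theory special case, in which every exponential argument vanishes and the horizontal steps become unit, must be checked to coincide with the ordinary Lorenz-curve/majorization criterion of~\cite{HHOLong,GourMNSYH13}; this follows immediately from Theorem~\ref{theorem:FreeStates} and Example~\ref{ex:FreeStates}. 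Once these conventions are fixed, the proof is essentially a direct appeal to Theorem~\ref{theorem:Quasiorder} and the Marshall--Olkin--Arnold Lorenz-curve criterion for $d$-majorization.
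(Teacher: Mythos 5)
Your proposal is correct and follows essentially the same route as the paper: the paper's proof simply defers to the grand-potential-theory version (Proposition~3 of~\cite{YungerHalpernR14}), whose content is exactly the reduction you describe---equilibrating-operation reachability equals equimajorization (Theorem~\ref{theorem:Quasiorder}), which equals rescaled-Lorenz-curve domination by the standard $d$-majorization criterion of~\cite{MarshallOA10}, with the unequal-operator case handled by composing with free states as in Definition~\ref{definition:Equimajorization}. The only quibble is a normalization slip: the horizontal increments of the curve are the unnormalized Boltzmann weights $e^{-(F_0 x_{0_\alpha}+\cdots+F_j x_{j_\alpha})/k_{\rm B}} = Z\,g_{R,\alpha}$, which sum to $Z$, rather than the normalized probabilities $g_{R,\alpha}$, which sum to $1$.
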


\begin{proof}
The proof immediately generalizes the proof of~\cite[Proposition 3]{YungerHalpernR14}. That Proposition 3 contains the manifestation, in grand-potential theories, of Theorem~\ref{theorem:LorenzCurves}. The grand-potential proof does not depend on extensive-variable operators ($X_0 = H$ or $X_1 = N$) directly. The $X_i$'s affect the proof only insofar as $g_R$ depends on the $X_i$'s. Hence the proof of~\cite[Proposition 3]{YungerHalpernR14} can be restated in $\mathcal{T}^{(\beta), p_1, \ldots, p_j}$ under the assumption that $g_R$ has the form of Eq.~\eqref{eq:FreeState}.
\end{proof}
\noindent An illustration of Theorem~\ref{theorem:LorenzCurves} appears in Figure~\ref{fig:CompareStates}.

%
%
% Batteries
%
%
\section{Work and batteries}
\label{section:Battery}

To calculate the work transferred during a state conversion, we must define work. In conventional statistical mechanics, work is defined as an integral along a phase-space path. How to define a quantum analog has been debated~\cite{TalknerLH07}. In thermodynamic resource theories, work has been defined in four ways. I will fuse two of the definitions and recast them in terms of generalized thermodynamic resource theories. I will focus on energy-representation resource theories, whose natural variables include $\beta$.

Batteries have been modeled in four ways in thermodynamic resource theories. A two-level \emph{work bit} appears in~\cite{FundLimits2,BrandaoHNOW14}. Though mathematically simple, the bit can be difficult to use in practice, if the precise amount of work to be transferred during some state conversion is not known while the bit is prepared~\cite{SkrzypczykSP13Extract}.   
% Reference: P. 2
In~\cite{SkrzypczykSP13Extract}, a quasiclassical weight whose height changes stores gravitational potential energy. In~\cite{YungerHalpernR14}, a battery is any quasiclassical system whose energy levels are finely spaced.
% and that occupies only pure states. 
% See "Beyond" -- p. 13 -- upper LHS -- mixed state
Coherences have been addressed in~\cite{FrenzelDR14}. For simplicity, this paper focuses on quasiclassical batteries generalized from~\cite{YungerHalpernR14,SkrzypczykSP13Extract}. The generalization will be illustrated when the weight in~\cite{SkrzypczykSP13Extract} is modeled with gravitational thermodynamic resource theories, which will be shown to be mathematically equivalent to grand-potential theories.

%
% General batteries
%
% \subsection{General batteries}

Consider a quasiclassical battery $B$ in $\mathcal{T}^{\beta, p_1, \ldots, p_j}$. The battery stores resourcefulness insofar as the density operator differs from the corresponding equilibrium state's. For simplicity, I assume that $B$ occupies only energy eigenstates (or states close, in $L^1$ distance, to energy eigenstates. See Sec.~\ref{section:Error_prob}\footnote{
% < f >
Part of the derivation of the $\varepsilon$-approximate work $\Wext (R)$
extractable from a state $R$ involves such a close state.
A free operation that leaves $B$ in such a state is constructed.
See Eq.~(C19) on p. 13 of~\cite{YungerHalpernR14}.}):
% < /f >
\begin{equation}
   B_E  :=  ( \ketbra{E}{E},  H,  X_1,  \ldots,  X_j).
\end{equation}
Little distinguishes the energy operator $H$ from the other $X_i$'s. Each $X_i$ represents some physical degree of freedom (DOF)~\cite{JaynesI57}. Resourcefulness can be stored in particle number, electric polarization, gravitational mass, etc. Angular momentum is treated as a resource in~\cite{VaccaroB11,BarnettV13}.

Suppose that $W$ denotes a positive number. Storing work in $B$ maps $B_E$ to $B_{E + W}$, and withdrawing work maps $B_{E+W}$ to $B_{E}$.
I will define states' work yields and work costs as in~\cite{YungerHalpernR14}: The \emph{maximum work $W_{\rm gain}( R )$ extractable from a state $R$} is the greatest value of $W$ for which 
$R + B_{E}  \xmapsto{equil.}   B_{E+W}$. The \emph{minimum work cost 
$W_{\rm cost}(R)$ of creating $R$} is the least value of $W$ for which 
$B_{E+W}  \xmapsto{equil.}  R  +  B_{E}$.

%
% Weight
%
%\subsection{Example: Weights in gravitational resource theories}

\begin{example}[Weights in gravitational resource theories]

Batteries can manifest as weights described by thermodynamic resource theories that model gravitational potential energy. Such theories, we shall see, are equivalent to grand-potential theories. 

Gravitational fields are modeled as follows in conventional thermodynamics~\cite{Alberty01,Guggenheim59,KirkwoodO61}. 
Consider a massive particle at a height $h$ in a uniform gravitational field sourced by external masses. The particle's mass $m$ serves as an extensive variable. The gravitational potential 
$\phi = gh$ serves as an intensive variable.
% References: Kirkwood: Ch. 15 -- p. 242-247
% Guggenheim: Ch. 10 -- p. 403-409

More generally, consider a system that contains $q$ chemical species. Let $m_i$ denote the mass of each species-$i$ particle. Each particle sits at some height. The set 
$\{h_0, h_1, \ldots, h_\zeta \}$ of possible heights will be approximated as discrete, similarly to the set in~\cite{SkrzypczykSP13Extract}. The ground is labeled as $h_0$, and species-$i$ particles at different heights are viewed as occupying different phases~\cite{Guggenheim59,KirkwoodO61,Alberty01}. The number of species-$i$ particles at height $h_\alpha$ is denoted by $N_i^\alpha$. The system has the gravitational potential energy
\begin{equation}   \label{eq:BatteryH}
   g  \sum_{\alpha = 1}^\zeta   \sum_{i = 1}^q    m_i   N_i^\alpha  h_\alpha.
\end{equation}

Gravitational potential energy can be cast as a contribution to chemical energy~\cite{Guggenheim59,KirkwoodO61,Alberty01}.  The chemical potential $\mu^\alpha_i$ that characterizes the phase $\alpha$ of species $i$ in the absence of gravitational and electric fields is called the \emph{standard chemical potential}. Combining $\mu_i^\alpha$ with a gravitational term yields the chemical potential $\tilde{\mu}^\alpha_i$:
\begin{align}
   \sum_{\alpha = 1}^\zeta   \sum_{i = 1}^q   \mu_i^\alpha N_i^\alpha
            +   g   \sum_{\alpha = 1}^\zeta    m_i   N_i^\alpha  h_\alpha  
        &  =   \sum_{\alpha = 1}^\zeta   \sum_{i = 1}^q
                  (\mu_i^\alpha   +   m_i  g  h_\alpha)   N_i^\alpha  \\
        &  =   \label{eq:GravE}
                 \sum_{\alpha = 1}^\zeta   \sum_{i = 1}^q
                  \tilde{\mu}_i^\alpha   N_i^\alpha.
\end{align}

In the resource-theory framework, grand-potential theories can model gravitational energy. Each possible state of $B$ has the form
\begin{equation}   \label{eq:BatteryState}
   B_E  :=  (\rho,  H,  N_0^1,  \ldots,  N_0^\zeta,   \ldots,   
      N_q^1,  \ldots,   N_q^\zeta ),
\end{equation}
wherein $\rho$ specifies the values of the $N_i^\alpha$ and $H$ represents the internal energy.

An analogous formalism describes the electrochemical energy of charged particles in an electric field~\cite{Guggenheim59,KirkwoodO61,Alberty01}. Grand-potential theories, featuring states specified by $\{ N_i^\alpha \}$ as well as by $H$, model fully the weight used often as a battery~\cite{SkrzypczykSP13Extract,Szilard29}. Using the theory suited to the battery's free energy, rather than a Helmholtz theory, becomes important if the battery has nongravitational DOFs. Such DOFs would be represented by extra extensive-variable operators in Eq.~\eqref{eq:BatteryState}. Free unitaries must separately conserve the operator associated with each DOF. This separate conservation appears in the general definition of equilibrating operations but not in the definition of Helmholtz theories' free operations.
\end{example}

%
%
% One-shot work
%
%
\section{One-shot work yield and work cost}
\label{section:OneShotWork}

The most work extractable, on average over $n \to \infty$ trials, during a thermodynamic process in which
$(T), p_1, \ldots, p_j, X_{j+1}, \ldots, X_k$ remain constant is 
$W = \Delta E[ (T), p_1, \ldots, p_j ]$. When work is extracted from few copies of a state $R$, knowing the optimal average work might prove less useful than knowing the amount $\Wext(R)$ of work outputted by a realistically faulty implementation of an optimal protocol. 
$\Wext(R)$ can be calculated, and the work $\Wform(R)$ required to create one copy of an approximation to $R$ can be bounded, with the \emph{hypothesis-testing entropy} $\Dh^\varepsilon$. Below,
$\Dh^\varepsilon$ and the failure tolerance $\varepsilon$ are defined as in~\cite{YungerHalpernR14}. $\Wext(R)$ and $\Wform(R)$ are quantified by immediate generalizations from~\cite{YungerHalpernR14}. Most of this section (apart from ``Background'') concerns free energies and so not the entropy theory. States are assumed to be quasiclassical.

%
% Error probability
%
\subsection{Background}
\label{section:Error_prob}

Consider a protocol for creating one copy of a \emph{target state} 
$R  :=  \bm{(}   r,  (H),  X_1, \ldots, X_j   \bm{)}$. A realistic trial generates an \emph{actual output}
$\tilde{R}  :=  \bm{(} \tilde{r},  (H),  X_1, \ldots, X_j   \bm{)}$ that might differ from $R$. For simplicity, $R$ and $\tilde{R}$ are assumed to differ only due to their state vectors. The $L^1$ distance $\frac{1}{2}  ||  r  -  \tilde{r}  ||_1$ quantifies the discrepancy. If the $L^1$ distance falls below the \emph{tolerance} $\varepsilon  \in  [0, 1]$, the trial succeeds,
\begin{equation}
   \frac{1}{2}  ||  r  -  \tilde{r}  ||_1
   \leq  \varepsilon,
\end{equation}
and $\tilde{R}$ is said to be $\varepsilon$-\emph{close} to $R$:
$R  \approx_\varepsilon  \tilde{R}$. 
The existence of a free operation that maps $R$ to a state $\tilde{S} \approx_\varepsilon S$ is denoted by $R  \succ_{(\beta),  p_1,  \ldots,  p_j}^\varepsilon  S$.
The work $\Wext(R)$ that is \emph{$\varepsilon$-extractable} from $R$ is the greatest $W$ that satisfies $R  +  B_E   \succ_{(\beta),  p_1,  \ldots,  p_j}^\varepsilon   B_{E+W}$ for any 
$E > 0$. The \emph{$\varepsilon$-work cost} $\Wform(R)$ is the least $W$ that satisfies
$B_{E+W}  \succ_{(\beta),  p_1,  \ldots,  p_j}^\varepsilon   R  +  B_E$.

\emph{Hypothesis testing} has been formulated in quantum-information contexts as follows~\cite{Hiai_91_Proper}. Imagine being handed a quantum state and being told that the state is $\rho$ or $\gamma$. Knowing the forms of $\rho$ and $\gamma$, we wish to ascertain which state we were handed. We perform a positive operator-valued measurement (POVM) 
$\{ Q,  \id - Q \}$. If the measurement yields outcome $Q$, the state is probably $\rho$; if $\id - Q$, then $\gamma$. 

Errors of two types can occur. We commit a Type I error if we were given $\rho$ but $\id - Q$ obtains, such that we guess the state is $\gamma$. The Type I-error probability equals ${\rm Tr} \bm{(} (\id - Q) \rho \bm{)}$. A Type II error occurs if we were given $\gamma$ but $Q$ obtains, such that we guess the state is $\rho$. The Type II-error probability Tr$(Q \gamma )$ can be constrained by $\varepsilon$ and $Q$. Consider choosing $Q$ such that the Type I-error probability is at most $\varepsilon$:  
${\rm Tr} \bm{(} (\id - Q) \rho \bm{)}  \leq  \varepsilon$. The \emph{optimal} $Q$ minimizes the Type II-error probability. The hypothesis-testing entropy $\Dh^\varepsilon$ is defined in terms of the optimal Type II-error probability $b_\varepsilon(\rho   ||   \gamma)$.

% Definition: D^eps_H
\begin{definition} \label{eq:DEpsH}
Let $\rho$ and $\gamma$ denote density operators defined on $\mathcal{H}$.
Consider distinguishing between $\rho$ and $\gamma$ by hypothesis test. 
The optimal Type II-error probability associated with any hypothesis test that has a Type I-error probability of at most  $\varepsilon$ is
\begin{align} \label{eq:htprimal}
   b_\varepsilon(\rho   ||   \gamma)
   :=   \mathop{\min_{\tr (Q\rho) \geq 1 - \varepsilon}}_{0\leq Q\leq \id}\tr(Q\gamma).
\end{align}
The hypothesis-testing relative entropy is defined as
\begin{align}  \label{eq:Primal}
   \Dh^\varepsilon(\rho   ||   \gamma)
   :=   -\ln b_\varepsilon(\rho   ||   \gamma)
\end{align}
or, equivalently, by $b_\varepsilon(\rho   ||   \gamma)=e^{-\Dh^\varepsilon(\rho   ||   \gamma)}$.
\end{definition}
\noindent Further details appear in~\cite{DupuisKFRR12,YungerHalpernR14}.

%
% Work theorems
%
\subsection{Quantification of one-shot work quantities}

$\Wext$ can be calculated, and $\Wform$ can be bounded, in terms of $\Dh^\varepsilon$.

\begin{theorem} 
\label{theorem:OneShotWork}
Let $R   :=   (r,  H,  X_1, \ldots, X_j)$ denote any quasiclassical state in any energy-representation thermodynamic resource theory $\mathcal{T}^{\beta, p_1, \ldots, p_j}$, and
let $G_R   :=   (g_R,  H,  X_1, \ldots, X_j)$ denote the corresponding equilibrium state.
The most work $\Wext(R)$ $\varepsilon$-extractable from $R$ is
\begin{align}  \label{eq:WExt}
   \Wext(R)   =   \frac{1}{\beta}   \Dh^{\varepsilon}(r   ||   g_R)
\end{align}
for all $\varepsilon  \in  [0, 1]$. The least work needed to form any $\tilde{R}  \approx_\varepsilon  R$ satisfies
\begin{align}  \label{eq:WForm}
 \max_{\delta   \in (0,   1-\varepsilon]}  \left[
     \frac{1}{\beta}   \Dh^{1 - \varepsilon - \delta}(r   ||   g_R)
     -   \frac{1}{\beta}   \log \left( \frac{1}{\delta} \right)   \right]
   \leq \Wform(R)
   \leq \frac{1}{\beta}   \Dh^{1 - \varepsilon}(r   ||   g_R)
        -  \frac{1}{\beta}   \log   \left( \frac{1-\varepsilon}{\varepsilon} \right)
\end{align}
for all $\varepsilon \in [0, 1]$.
\end{theorem}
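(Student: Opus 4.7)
The strategy is to generalize, almost verbatim, the proof of the analogous Theorem~4 in~\cite{YungerHalpernR14}, which handles the grand-potential case. Every structural ingredient needed there has now been established for the full family $\mathcal{T}^{\beta, p_1, \ldots, p_j}$: the equivalence of equilibrating operations with equimajorization (Theorem~\ref{theorem:Quasiorder}), the rescaled-Lorenz-curve test for state convertibility (Theorem~\ref{theorem:LorenzCurves}), and the generalized Gibbs form of the free states (Theorem~\ref{theorem:FreeStates}). Concretely, I would re-express the defining conditions of $\Wext(R)$ and $\Wform(R)$ as geometric statements about rescaled Lorenz curves, and then translate those statements into the Neyman--Pearson language underlying $\Dh^\varepsilon$.

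For the extraction equality~\eqref{eq:WExt}, I would rewrite $R + B_E \succ_{\beta, p_1, \ldots, p_j}^{\varepsilon} B_{E+W}$, via Theorem~\ref{theorem:LorenzCurves}, as the existence of a battery state $\tilde{B} \approx_\varepsilon B_{E+W}$ whose rescaled Lorenz curve lies nowhere above $L_{R + B_E}$. Because $B_{E+W}$ is a pure joint eigenstate of all state operators, its rescaled Lorenz curve consists of only two linear segments meeting at a single elbow whose $x$-coordinate is the generalized Boltzmann weight of $\ket{E+W}$; $\varepsilon$-smoothing merely lowers that elbow by $\varepsilon$. The product structure of $R + B_E$ then lets the battery contribution factor out of the curve comparison, collapsing the condition to a single inequality relating $e^{-\beta W}$ to the $x$-value at which $L_R$ first reaches height $1-\varepsilon$. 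That $x$-value is, by the primal characterization~\eqref{eq:htprimal}, exactly $b_\varepsilon(r\,||\,g_R) = e^{-\Dh^{\varepsilon}(r\,||\,g_R)}$; solving for the largest feasible $W$ yields~\eqref{eq:WExt}. The formation bounds~\eqref{eq:WForm} follow from the reverse analysis: one demands that the two-segment curve $L_{B_{E+W}}$ dominate $L_{\tilde{R} + B_E}$ for some $\tilde{R} \approx_\varepsilon R$. Here the $\varepsilon$-tolerance must be split between shifting the battery elbow and approximating $r$ by $\tilde{r}$, which is what produces the free parameter $\delta \in (0, 1-\varepsilon]$ in the lower bound and the $\frac{1}{\beta}\log\!\bigl((1-\varepsilon)/\varepsilon\bigr)$ offset in the upper bound.

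The main obstacle is thus not conceptual but technical: controlling the $\varepsilon$-smoothing in the formation direction, which is precisely what prevents the matching of upper and lower bounds for $\Wform$ and forces the appearance of $\delta$. One minor additional check, unavoidable in the generalized setting but absent from~\cite{YungerHalpernR14}, is that the fixed-eigensubspace condition of Proposition~\ref{proposition:Fixed} does not obstruct any of the explicit Lorenz-curve-saturating equilibrating operations that appear in the grand-potential proof; this is immediate, since the battery and the target can always be prepared within a common eigensubspace of the spectator operators $X_{j+1}, \ldots, X_k$, leaving the constructions of~\cite{YungerHalpernR14} intact.
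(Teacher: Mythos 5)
Your proposal takes essentially the same route as the paper: both reduce the theorem to the grand-potential result of~\cite{YungerHalpernR14} by observing that the extensive-variable operators $X_i$ enter only through the form of $g_R$, so the equimajorization/hypothesis-testing argument carries over verbatim once $g_R$ is given by Eq.~\eqref{eq:FreeState}. The additional detail you supply about the two-segment battery Lorenz curve and the splitting of the $\varepsilon$-tolerance is consistent with the cited proof's mechanics, which the paper itself leaves implicit.
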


\begin{proof}
The proof immediately generalizes the proof of~\cite[Theorem 5]{YungerHalpernR14}. The latter theorem is the manifestation, in grand-potential theories, of Theorem~\ref{theorem:OneShotWork} above. The grand-potential proof relies on equimajorization and on hypothesis tests between $r$ and $g_R$, which are well-defined in $\mathcal{T}^{\beta, p_1, \ldots, p_j}$. Extensive-variable operators ($X_0 = H$ and $X_1 = N$) do not appear in the grand-potential proof directly. The $X_i$'s affect the proof insofar as $g_R$ depends on them. Hence the proof of~\cite[Theorem 5]{YungerHalpernR14} can be restated in $\mathcal{T}^{\beta, p_1, \ldots, p_j}$ under the assumption that $g_R$ has the form in Eq.~\eqref{eq:FreeState}.
\end{proof}

%
%
% Asymptotic limit
%
%
\section{Thermodynamic limit, generalization of Szil\'{a}rd's engine and Landauer erasure}
\label{section:ThermLimit}

Consider distilling work from, or creating, many copies of a quasiclassical state $R$ in any thermodynamic resource theory $\mathcal{T}^{\beta, p_1, \ldots, p_j}$  other than the entropy theory:
\begin{equation*}
   R^{\otimes n}  :=  
   \left(   r^{\otimes n},  \sum_{\mu =1}^n H_\mu,   \sum_{\mu =1}^n X_{1_\mu},  
             \ldots,   \sum_{\mu=1}^n X_{j_\mu}   \right).
\end{equation*}
The \emph{asymptotic}, or thermodynamic, limit is defined by $n \to \infty$.
We shall see that $\Wext( R^{\otimes n} )$ and $\Wform( R^{ \otimes n } )$ converge, as $n \to \infty$, to differences between functions reminiscent of free energies. I will contrast the energy representation of most thermodynamic resource theories with the entropy representation of the entropy theory; will derive the optimal asymptotic rate of interconversion between states; and will show how the interconvertibility of types of thermodynamic resourcefulness generalizes Szil\'{a}rd work extraction and Landauer erasure.  By generalizing~\cite{YungerHalpernR14}, I will show that $\Wext( R^{\otimes n} )$ can differ from $\Wform( R^{\otimes n} )$.

The asymptotic limit follows from the Asymptotic Equipartition Theorem (AEP)~\cite{Hiai_91_Proper}
\begin{equation}   \label{eq:AEP}
   \lim_{n \to \infty}   \frac{1}{n}
   \Dh^\varepsilon ( r^{\otimes n} || g_R^{\otimes n} )
   =   D(r || g_R)
   \quad \forall \varepsilon \in (0, 1),
\end{equation}
wherein the relative entropy is 
$D( r || g_R)  :=  \sum_\alpha [ r_\alpha  (\log r_\alpha  -  \log g_\alpha) ]$.
AEPs have been used to derive the asymptotic limits of Helmholtz theories~\cite{FundLimits2}, the entropy theory~\cite{GourMNSYH13}, and grand-potential theories~\cite{YungerHalpernR14}. 
Applying Eq.~\eqref{eq:AEP} to Eq.~\eqref{eq:WExt} and to Ineqs.~\eqref{eq:WForm} yields
\begin{equation}
   \lim_{n \to \infty}  \Wext( R^{\otimes n} )
   =   \lim_{n \to \infty}  \Wform( R^{\otimes n} )
   =   \frac{1}{\beta}  D( r || g_R )
   \quad \forall \varepsilon \in (0, 1).
\end{equation}
Substituting in the definition of $D$ will clarify this result. I will define logarithms as base-$e$, denote eigenvalue $\alpha$ of $X_i$ by $x_{i_{ \alpha }}$, and denote the partition function by $Z$:
\begin{align}
   \frac{1}{\beta}  D(r || g_R)
   &  =   \frac{1}{\beta}   \sum_\alpha
             r_\alpha   \left[   \log (r_\alpha)
             -   \log \left(  \frac{ e^{ - \frac{1}{ k_B } 
                                           \left(  F_0 x_{0_{ \alpha }}    +  F_1 x_{1_{ \alpha }}    
                                           + \ldots + F_j x_{j_{ \alpha }} \right) } }{
                           Z }   \right)   \right]  \\
   & = \frac{1}{\beta}   \sum_\alpha   r_\alpha  \log (r_\alpha)  
           +   \sum_\alpha  r_\alpha  ( E_{\alpha}  -  p_1 x_{1_{\alpha}}  -  \ldots  
                    -  p_j x_{j_{\alpha}} )  
           +   \frac{1}{\beta}  \log Z  \\
   & =  \label{eq:AsympRate}
           \langle H \rangle_r   -   T \: \kB S(r)   -   p_1  \langle X_1 \rangle_r   
            -   \ldots   -   p_j \langle X_j \rangle_r  
           +   \kB T  \log Z.
\end{align}
The dimensional factor $\kB$ converts the Shannon entropy $S(r)  :=  -\sum_i r_i \log (r_i)$ into statistical-mechanical entropy~\cite{Belof09}, and $\langle . \rangle_r$ denotes an expectation value relative to the distribution $r$. 

In the asymptotic limit, the average work extractable from one copy of $R$, and the average work needed to create one copy, with faulty protocols approach the free energy that lends its name to $\mathcal{T}^{\beta, p_1, \ldots, p_j}$.
In Helmholtz theories, Eq.~\eqref{eq:AsympRate} has the form of $F  :=  E - TS$; in grand-potential theories, the form of $\Phi  :=  E - TS - \mu N$. Equation~\eqref{eq:AsympRate} echoes Eq.~\eqref{eq:ThermoWork}, the thermodynamic equality for the average work extractable or required during a quasistatic transformation: $W  =  \Delta$(Free energy). Resource-theory work quantities not only shed new light on single-state transformations, but also converge, in the asymptotic limit, to expressions one might expect from thermodynamics.

That Eq.~\eqref{eq:AsympRate} contains an energy contrasts with the analogous result in the entropy theory. The resource in the entropy theory has been called \emph{information}~\cite{HHOShort,HHOLong} and \emph{nonuniformity}~\cite{GourMNSYH13}. The average resourcefulness extractable from, and the average resourcefulness needed to create, one copy of $R$ has been shown to equal
\begin{equation}   \label{eq:NonunAsymp}
   D(r  ||  g_R)   =   \log  d   -   S(r)
\end{equation}
in the asymptotic limit~\cite{HHOShort,HHOLong}. $d$ denotes the number of elements in $r$, and $g_R  =  ( \underbrace{ \frac{1}{d}, \ldots, \frac{1}{d} }_d )$ denotes a microcanonical ensemble.  Equation~\eqref{eq:AsympRate} has dimensions of energy, whereas Eq.~\eqref{eq:NonunAsymp} is dimensionless like entropy. The entropy-theory Eq.~\eqref{eq:NonunAsymp} is in the entropy representation, whereas the Eq.~\eqref{eq:AsympRate} that characterizes other thermodynamic resource theories is in the energy representation (in the sense of Sec.~\ref{section:Background}.\ref{section:FundRelation}).

Equation~\eqref{eq:AsympRate} implies the \emph{optimal asymptotic rate of conversion}. Let $m_n$ denote the greatest number of copies of $S$ that equilibrating operations can generate from $n$ copies of $R$. The \emph{optimal rate of conversion from $n$ copies} is
\begin{equation*}
   \mathcal{R}_n (R  \xmapsto{equil.}   S)  =   \frac{ m_n }{n}.
\end{equation*}
In the asymptotic limit, $\mathcal{R}_n$ approaches the \emph{optimal asymptotic conversion rate}:
\begin{equation*}
   \lim_{n \to \infty}   \mathcal{R}_n( R  \xmapsto{equil.}   S )   
   =   \mathcal{R}_\infty( R  \xmapsto{equil.}   S ).
\end{equation*}
In $\mathcal{T}^{(\beta), p_1, \ldots, p_j}$,
\begin{equation}  \label{eq:AsympConvert}
   \mathcal{R}_\infty( R  \xmapsto{equil.}   S )   
   =   \frac{ D(r || g_R) }{ D( s || g_S ) }.
\end{equation}
During the optimal conversion protocol in an energy-representation theory, $D(r || g_R)$ units of work are extracted per copy of $R$, and one copy of $S$ is generated per $D( s || g_S )$ units of work. Equation~\eqref{eq:AsympConvert} has been derived for Helmholtz theories~\cite{BrandaoHORS13,FundLimits2}, grand-potential theories~\cite{YungerHalpernR14}, and the entropy theory~\cite{HHOShort,HHOLong,GourMNSYH13}.

Equation~\eqref{eq:AsympConvert} implies that, in the asymptotic limit, all quasiclassical nonequilibrium states in $\mathcal{T}^{(\beta), p_1, \ldots, p_j}$ are \emph{reversibly interconvertible}. From enough copies of any nonequilibrium $R$, equilibrating operations can generate copies of any $S$. The optimal $R$-to-$S$ rate equals the inverse of the optimal $S$-to-$R$ rate. 
% Statement about inverse => reversibility. Reference: Gour, Marvian, Spekkens -- p. 5, LHS
This interconvertibility might surprise us. The resourcefulness of $R$ might manifest in one form---as information, chemical energy, magnetic energy, etc.---while the resourcefulness of $S$ might manifest in another. That thermodynamic resourcefulness of each sort can transform into resourcefulness of every other is not \emph{a priori} obvious. After all, though LOCC can asymptotically interconvert all pure bipartite entangled states, LOCC cannot interconvert all pure tripartite entangled states~\cite{LindenPSW99,VDC00,WuZ01}. Equation~\eqref{eq:AsympConvert} implies that thermodynamic resourcefulness resembles bipartite, more than tripartite, entanglement.~\footnote{
% < f >
Interconversions amongst information, energy, and angular momentum were analyzed outside of resource theories in~\cite{VaccaroB11,BarnettV13}.
}
% < /f >

This transformability generalizes Szil\'{a}rd's engine and Landauer erasure. Szil\'{a}rd's engine converts information into work~\cite{Szilard29}. The classical engine consists of a particle known to occupy a box's left-hand side. Consider sliding a partition through the box's center, attaching a weight to the partition, and coupling the particle to a heat bath. The particle, modeled by an expanding ideal gas, pushes the partition to the box's right-hand side. The weight rises, its gravitational potential energy compensating for the loss of information about the particle's location. Landauer erasure, resulting from reversing Szil\'{a}rd's engine, converts work into information~\cite{Landauer61}. In general thermodynamic resource theories, resourcefulness can be converted not only between information and gravitational potential energy, but amongst all the DOFs represented by state operators. Value can be encoded in particle number, magnetic moments, etc.

Though the large-$n$ limit of $\Wext(R^{\otimes n})$ equals that of 
$\Wform(R^{\otimes n})$, the rates at which these quantities approach the limit differ. Information theory has been used to show that, in grand-potential theories, 
\begin{equation}
   \Wext(R^{\otimes n})  \approx  \frac{1}{\beta} [ n \, D(r  ||  g_R)   -   O( \sqrt{n} )]
\end{equation}
and
\begin{equation}
   \Wform(R^{\otimes n})  \approx  \frac{1}{\beta} [ n \, D(r  ||  g_R)   +   O( \sqrt{n} )]
\end{equation}
as $n$ grows large~\cite{YungerHalpernR14}. The proof does not depend on the extensive-variable state operators $X_{i = 0, 1, \ldots, j}$ explicitly. The proof depends on the $X_i$'s only implicitly, insofar as $g_R$ depends on the $X_i$. As in the proof of Theorem~\ref{theorem:Quasiorder}, the grand-potential proof can be restated in $\mathcal{T}^{\beta, p_1, \ldots, p_j}$, under the assumption that $g_R$ has the form in Eq.~\eqref{eq:FreeState}. Creating a state $R$ costs $O ( \sqrt{n} )$ units of work more than can be extracted from $R$. We pay more than we gain, as expected from the Second Law of Thermodynamics.

%
%
% Conclusions
%
%
\section{Discussion}
\label{section:Discussion}

This paper has motivated and detailed the generalization of thermodynamic resource theories beyond heat baths. Traditional thermodynamics suggested a correspondence between one family of thermodynamic resource theories and each type of interaction (equivalently, each set of natural variables, each thermodynamic potential, and each type of bath or external field). Generalizing thermodynamic resource theories opens diverse realistic systems---possibly such as polymers, electrochemical batteries, and magnets---to modeling. This opening is hoped to facilitate experimental tests of one-shot statistical mechanics.

The generalization opens questions to investigation. First, operators $X_i$ and $X_j$ that represent external variables might fail to commute. What implications does this nonclassicality have for thermodynamic resource theories?
Second, the number operator $N$ enables one copy of a state $R$ to consist of a macroscopic number of particles, inviting a reconsideration of what ``one-shot'' means. The reconsideration might facilitate experimental tests of one-shot theory, as might the mathematical similarity amongst the families of thermodynamic resource theories. Finally, one might eliminate the asymmetry between free-energy theories (like Helmholtz theories~\cite{BrandaoHORS13}), which have been cast in terms of the energy representation of thermodynamics, and the entropy theory~\cite{HHOShort,GourMNSYH13}, which has been cast in terms of the entropy representation.

These challenges might offer novel insights into thermodynamics. A more-straightforward task is to ensure that all Helmholtz-theory results generalize to arbitrary thermodynamic resource theories $\mathcal{T}^{\beta, p_1, \ldots, p_j}$. Theorems~\ref{theorem:FreeStates},~\ref{theorem:Quasiorder}, and~\ref{theorem:LorenzCurves} generalize theorems about Helmholtz and grand-potential theories. That other Helmholtz-theory results, including ones about catalysis~\cite{BrandaoHNOW14} and coherences~\cite{BrandaoHORS13,FaistOR14,LostaglioJR14,LostaglioKJR15}, generalize merit checking.

%
% Noncommutation
%
\subsection{Noncommutation}
\label{section:Quantum}

Many Helmholtz-theory results concern states whose density operators $\rho$ commute with their Hamiltonians $H$~\cite{Janzing00,FundLimits2,YungerHalpernR14}. Noncommutation of $\rho$ and $H$ has been explored recently~\cite{BrandaoHORS13,FaistOR14,LostaglioJR14,LostaglioKJR15}. $\rho$ and $H$ are the only operators that can fail to commute, because they alone define states in Helmholtz theories. In a general thermodynamic resource theory $\mathcal{T}^{\beta, p_1, \ldots, p_j}$, more operators can fail to commute: $\rho$, $H$, and the other $X_i$ that represent extensive variables. Jaynes considered such noncommutation (outside of resource theories)~\cite{JaynesII57}.
% Reference: Jaynes II -- p. 177, RHS
Noncommutation impacts free states, free operations, and interpretations of thermodynamic resource theories.

Since the initial arXiv release of this paper, in 2014,
this noncommutation has studied~\cite{Lostaglio_17_Thermodynamic,Guryanova_16_Thermodynamics,YH_16_Microcanonical}.
Further works have build on this foundation
(e.g.,~\cite{Ito_16_Optimal,NathBera_17_Thermodynamics,MurPetit_17_Generalised}).
The present paper has provided motivation for these later works.
Similar motivation was developed independently
in~\cite{Lostaglio_14_Masters}.
% Thesis mentioned in email, from 9/18/14, in chain "Thermo laws beyond free energy relations"

Let $R := (\rho, H, X_1, \ldots, X_j)$ denote a state whose $[H, X_1]  \neq  0$. Consider attempting to associate $R$ with a free state $G_R$. The density operator $\gamma_R$ of $G_R$ could not be derived as in Appendix~\ref{section:FreeStateApp}. 
$\gamma_R$ is shown, under the assumption that the $X_i$ commute, to depend on $X_i$ as
\begin{equation}   \label{eq:ProductSum}
   \gamma_R \propto
   \prod_{i = 0}^j   e^{ -\frac{1}{ \kB }  F_i X_i }
   =   e^{  - \frac{1}{ \kB }    \sum_{i = 0}^j  F_i  X_i  }
\end{equation}
[Eq.~\eqref{eq:FreeState}]. If the $X_i$ fail to commute, the Baker-Campbell-Hausdorff Formula could not straightforwardly justify the equality in Eq.~\eqref{eq:ProductSum}.

Noncommutation can restrict the unitaries that can evolve $R$ for free. In a Helmholtz theory $\mathcal{T}^{\beta}$, nontrivial functions $U = f(H)$ of $H$ can evolve states 
$R  :=  (\rho, H)$. % Free time evolution, for example, manifests as $U(t)  =  e^{ - \frac{i}{\hbar} H t }$ if $H$ remains constant.
% Reference about free time evolution under Hamiltonian in thermo resource theory:
% Fundamental Limitations -- Nature Comms -- Supplementary Notes -- p. 4:
% "The total systems evolve altogether under some time-independent Hamiltonian."
In a non-Helmholtz theory $\mathcal{T}^{\beta, p_1, \ldots, p_j}$, noncommutation of $H$ with any $X_{i \neq 0}$ can prevent states from evolving for free under all nontrivial $U = f(H)$. Such $U$'s could fail to preserve $X_i$, violating Eq.~\eqref{eq:ConstrainU}. In extreme situations, free unitaries may be restricted to $U(\phi)  = e^{i \phi}  \id$ for $\phi \in \mathbbm{R}$. 

For example, let 
\mbox{$R := (\rho, H, X_1, X_2)$} denote a state whose $[X_1, X_2]  \neq  0$ in
$\mathcal{T}^{\beta, p_1, p_2}$. A simple Hamiltonian might have the form 
$H  =  p_1 X_1  +  p_2  X_2$.\footnote{
% < f >
This form of $H$ is inspired by Fundamental Relations such as the $E = \mathbf{B} \cdot \mathbf{m}$ of a system whose magnetization is $\mathbf{m}$ and that occupies an external magnetic field $\mathbf{B}$.}
% < /f >
As $H$ does not commute with $X_1$ or $X_2$, neither does $U(t)  =  e^{ - \frac{i}{\hbar} H t }$. Time evolution does not manifest as a free unitary. To understand the physical significance of this surprising mathematical conclusion, one might apply insights from quantum mechanical SO coupling, in which the orbital angular momentum $L$ and the spin angular momentum $S$ are not conserved separately.

%
%
% Discussion -- one-shot, tests
%
%
\subsection{Meaning of ``one-shot,'' experimental tests}
\label{section:OneShot}

The entropy, Helmholtz, and grand-potential theories have been portrayed as alternatives to traditional thermodynamics, which describes on the order of $10^{24}$ particles~\cite{FundLimits2,BrandaoHNOW14,YungerHalpernR14,LostaglioJR14}. In these theories (more generally, in one-shot statistical mechanics), one-shot information theory is applied to single copies of a state. The work extractable from, and the work cost of creating, $n$ copies of a state converge, as $n \to \infty$, to expressions reminiscent of differences between thermodynamic free energies [Eq.~\eqref{eq:AsympRate}]. 

The significance of $n$ merits reevaluation in the light of $N$. $N$ appears to belong in thermodynamic resource theories: $N$ resembles $E$ in conventional thermodynamics, and $E$ manifests in resource theories (as $H$).
% < f >
Particle number resembles energy because both are represented by extensive variables. As Jaynes says, ``the energy plays a preferred role among all dynamical quantities because it is conserved \ldots however, \ldots all measurable quantities may be treated on the same basis, subject to certain precautions''~\cite{JaynesI57}.
% Reference: Jaynes I -- p. 627, Sec. 5
% < /f >
Yet $N$ can have arbitrarily large eigenvalues. In Helmholtz theories, $N$ remains ``behind the scenes'' due to the fixed-eigensubspace condition [Proposition~\ref{proposition:Fixed}]. Yet $\supp(\rho)$ can occupy an eigensubspace of $N$ associated with many particles. One copy of a state $R  :=  (\rho,  H)$ can correspond to $10^{24}$ particles, to a conventional thermodynamic system. Yet one copy of $R$ has been called ``one-shot'' and has been contrasted with conventional thermodynamics.

Magnets illustrate this seeming contradiction. Consider a magnet that consists of about $10^{24}$ spins, whose total magnetic moment is $\mathbf{m}$, and that occupies the state $R$. If $T$ and the external magnetic field $\mathbf{B}$ remain constant, a resource theory for $E[ T, \mathbf{B} ]$ models the distillation of work from $R$. By Eq.~\eqref{eq:WExt}, the work extractable from one system of approximately $10^{24}$ spins is the one-shot quantity $\Wext(R)$.

This paradox might facilitate experimental tests of one-shot statistical mechanics. Whether $\supp(\rho)$ occupies the $N = 1$ eigensubspace or the $N = 10^{24}$ eigensubspace does not affect resource-theory calculations. $N=1$ states interest one-shot theorists, whereas mathematically equivalent $N = 10^{24}$ states can be controlled more easily in laboratories.
To test one-shot results such as Eqs.~\eqref{eq:WExt} and~\eqref{eq:WForm}, one might perhaps use systems of $10^{24}$ particles. Ascertaining the form of a $\rho$ that characterizes a large system poses a practical challenge.

Like $N$, the mathematical equivalence of families of thermodynamic resource theories might facilitate experimental tests. Having been theoretically developed extensively, Helmholtz theories merit testing. Helmholtz theories are equivalent, up to a Legendre transform, to Gibbs-free-energy theories. Chemists apply the Gibbs free energy \mbox{$G  :=  E[T, p]  :=  E  -  TS  +  pV$} often, as to room-temperature, atmospheric-pressure chemical reactions. Whether common experiments could test one-shot statistical mechanics remains to be explored. % Modeling chemical reactions with thermodynamic resource theories would require careful thought about many complications, including the Gibbs-Duhem Relation mentioned in Sec.~\ref{section:DefineTheories}. Yet perhaps the reactions can be modeled in principle.
Polymers offer another possible platform. The mechanical force $f$ applied to a length-$L$ polymer appears in the free energy
$F  =  - pV + \mu N + fL$
\cite{Alberty01}.
% Reference: Alberty -- p. 1373
% Similar to DNA tests of Crooks' Theorem
This $F$ corresponds to a family of thermodynamic resource theories, and polymers have been strained to test fluctuation relations experimentally~\cite{MossaMFHR09,ManosasMFHR09,AlemanyR10}. Generalizing thermodynamic resource theories expands the theories' potential for modeling real physical systems.

%
%
% Discussion -- part 3
%
%
\subsection{Further generalization}

Conventional thermodynamics can be formulated equivalently in energy and entropy representations (Sec.~\ref{section:Background}). In most of the resource theories in this paper, resourcefulness is quantified in terms of work, and average work yields and work costs converge to free energies in the asymptotic limit. These theories are cast in terms of the energy representation. In the entropy theory (or the resource theory for nonuniformity, information, or informational nonequilibrium~\cite{HHOShort,HHOLong,GourMNSYH13}), a $d$-dimensional state $r$ represents (as has been argued in this paper) a closed isolated system. The average resourcefulness extractable from, and the average resourcefulness cost of creating, $r$ converge to the entropic quantity $\log d  -  S(r)$~\cite{HHOShort}. The entropy theory is cast in terms of the entropy representation.

That an energy-representation resource theory models closed isolated systems seems a reasonable expectation. That entropy-representation theories are equivalent to Helmholtz and grand-potential theories seems a reasonable expectation. Realizing or discounting those expectations may shed new light on one-shot statistical mechanics.

%
%
% Acknowledgements
%
%
\section*{Acknowledgements}
The author is grateful for conversations with Ian~Durham, Tobias~Fritz, David~Jennings, Matteo~Lostaglio, Iman~Marvian, Evgeny~Mozgunov, Markus~P.~M\"{u}ller, Joseph~M.~Renes, Brian~Space, and Rob~Spekkens. This research was supported by a Virginia Gilloon Fellowship, an IQIM Fellowship, NSF grant PHY-0803371, and the Perimeter Institute for Theoretical Physics. The Institute for Quantum Information and Matter (IQIM) is an NSF Physics Frontiers Center supported by the Gordon and Betty Moore Foundation. Research at the Perimeter Institute is supported by the Government of Canada through Industry Canada and by the Province of Ontario through the Ministry of Research and Innovation.

%
%
% Appendices
%
%

\section{Appendices}
\begin{appendices}

%
%
% Volume operator
%
%
\section{Volume as an operator?}
\label{section:Volume}

In Sec.~\ref{section:States}, extensive thermodynamic variables are associated with operators. The volume $V$ is such a variable.  Jaynes represents volume with an operator in~\cite{JaynesII57}.
% Reference: Jaynes II -- p. 177, RHS
But what is a volume operator? It might be understood similarly to the position operator $x$. The position operator is often explained in terms of a photon that scatters off a particle (whose position is to be measured) and into a detector. From this story, we extrapolate to the idea of a general position operator $x$. A similar story might be told about $V$.

Consider a gas in a box that consists of five rigid walls and one moveable partition. The partition's position governs the gas's volume. Since the partition is a quantum object, its position fluctuates and merits modeling with an operator $x$. So, by extension, does the box's volume.

Like $x$ and unlike $N$, $V$ has a continuous spectrum. A rigorous incorporation of this continuity into Sections~\ref{section:DefineTheories}-\ref{section:ThermLimit} is expected to follow from a discretization of space and a limit as the discretization vanishes~\cite{RenesYHM13}. 
Space may simply be discretized for the purposes of this paper, as in~\cite{SkrzypczykSP13Extract}.
% Analogous statements concern other continuous-spectrum extensive-variable operators.

Sufficiently large numbers $n_i$ of particles cannot fit in sufficiently small volumes (if $\mathcal{S}$ is not a relativistic system such as a black hole). For example, the macroscopic $n_i = 10^{24}$ seems incompatible with the atomic-scale $v_j = 10^{-30} \: {\rm m}^3$. How can $N$ and $V$ share an eigenstate
$\ket{ n_i, v_j }  =  \ket{ 10^{24},10^{-30} \: {\rm m}^3 }$? This eigenstate must correspond to an energy so enormous that $\mathcal{S}$ effectively cannot occupy $\ket{ n_i, v_j }$. Systems are assumed not to occupy states associated with energies above some cutoff.

%
%
% Appendix: Free-state form
%
%
\section{Derivation of the form of free states}
\label{section:FreeStateApp}

Theorem~\ref{theorem:FreeStates} can be proved as follows.

% Proof: Free-state derivation
\begin{proof}
Theorem~\ref{theorem:FreeStates} trivially generalizes~\cite[Theorem 1]{YungerHalpernR14}. In~\cite{YungerHalpernR14}, the grand-canonical form of the $g$'s in grand-potential theories was derived in three steps. Let us sketch these steps first.

First, the derivation of the microcanonical form $( \underbrace{ \frac{1}{d}, \ldots, \frac{1}{d} }_d )$ of the entropy theory's free states is reviewed. Horodecki \emph{et al.} derive the form via a proof by contradiction~\cite{HHOLong}. Suppose that some nonuniform state $u_0$ were free. An agent could generate some large number $n$ of copies of $u_0$. Using free unitaries, the agent could Shannon-compress the mixedness in $(u_0)^{\otimes n}$, separating the mixedness from a state arbitrarily close to being pure. Able to create purity and mixedness, the agent could create states other than $u_0$ for free. 

Second, the free states in Helmholtz theories are shown to be canonical ensembles: 
$g_i  =  e^{- \beta E_i} / Z$. The weight of $g$ on each energy subspace $S_E$ is shown to be distributed uniformly across the energy levels that span $S_E$ because of Step 1. Then, the ratio $g(E + \Delta) / g(E)$ of two weights is shown to depend only on the gap $\Delta$ between the levels. The ratio is shown to vary as $e^{-\beta \Delta}$, and all gaps in all free states are shown to correspond to the same $\beta$.\footnote{
% < f >
An alternative resource-theory derivation of the canonical ensemble appears in~\cite{BrandaoHNOW14}.}
% < /f >

Finally, the free states $G := (g, H, N)$ in grand-potential theories are shown to be grand canonical ensembles. If $N$ is totally degenerate, $G$ represents a bath whose particle number remains constant, like the free states in a Helmholtz theory. Hence $g_i$ varies with $E_i$ as $e^{-\beta E_i}$. By an analogous argument, $g_i$ varies with $n_i$ as $e^{\beta \mu n_i}$. Hence 
$g_i = e^{- \beta (E_i - \mu n_i)} / Z$.
The canonical-ensemble derivation generalizes immediately to 
$\mathcal{T}^{\beta, p_1, \ldots, p_j}$. The argument applied to $N$ applies to each $X_i$, so Eq.~\eqref{eq:FreeState} follows from 
\begin{align}
   g_\alpha  
   & =  e^{ - \beta (E_\alpha  -  p_1 x_{1_\alpha}  -  \ldots  -  p_j x_{j_\alpha}) }  / Z   \\
   & =  e^{ - \frac{1}{\kB} \left(  
                       \frac{1}{T}  x_{0_\alpha}   +   \frac{p_1}{T} x_{1_\alpha}   +   \ldots   
                       +   \frac{p_j}{T} x_{j_\alpha}   \right) } / Z   \\
   & =  e^{ - \frac{1}{\kB} \left(  
                       F_0 x_{0_\alpha}   +   \ldots   +   F_j x_{j_\alpha}   \right) } / Z.
\end{align}
\end{proof}

\end{appendices}

%
%
% Bibliography
%
\bibliography{NoneqRefs}

\end{document}